\documentclass[11pt]{article}
\usepackage{authblk}

\usepackage[utf8]{inputenc}
\usepackage[T1]{fontenc}
\usepackage{lmodern}
\usepackage[a4paper,margin=2.5cm]{geometry}
\usepackage{amsmath,amssymb,amsfonts,amsthm,mathtools,bm}
\usepackage{physics}
\usepackage{graphicx}
\usepackage{xcolor}
\usepackage{booktabs}
\usepackage{array}
\usepackage{enumitem}
\usepackage{hyperref}
\usepackage[nameinlink,noabbrev]{cleveref}
\usepackage{caption}
\usepackage{subcaption}
\usepackage{float}
\usepackage{tikz}
\usepackage{csquotes}
\usepackage{microtype}

\hypersetup{
    colorlinks=true,
    linkcolor=blue!60!black,
    citecolor=blue!60!black,
    urlcolor=blue!60!black
}

\newtheorem{proposition}{Proposition}

\theoremstyle{definition}
\newtheorem{definition}{Definition}

\DeclareMathOperator{\Trc}{Tr}

\newcommand{\Wiso}{d_{\mathrm{iso}}}
\newcommand{\Wtheta}{d_{\theta}}
\newcommand{\Wmeas}{d_{\mathcal M}}
\newcommand{\etal}{\eta}
\newcommand{\nth}{n_{\mathrm{th}}}
\newcommand{\Ntot}{N_{\mathrm{tot}}}
\newcommand{\Nsq}{N_{\mathrm{sq}}}
\newcommand{\Ndisp}{N_{\mathrm{disp}}}
\newcommand{\vac}{\mathrm{vac}}

\newcommand{\out}{\mathrm{out}}
\newcommand{\refstate}{\mathrm{ref}}

\title{From State-Space Transport to Measurement-Aware Distinguishability in Quantum Sensing}
\author[1]{Arnaud Coatanhay}
\author[1]{Angélique Drémeau}

\affil[1]{Lab-STICC, UMR CNRS 6285, ENSTA, Institut Polytechnique de Paris, \newline
2 rue Fran\c{c}ois Verny, 29806 Brest Cedex 9, France}

\date{}

\begin{document}

\maketitle

\begin{abstract}
Overlap-based distinguishability measures, such as fidelity- or Chernoff-type quantities, play a central role in quantum sensing and quantum illumination. In strongly lossy and fluctuating environments, however, these quantities may become numerically compressed and therefore less informative for optimization, monitoring, or adaptive control. In this work, we investigate transport-based distinguishability criteria for lossy quantum sensing. We first introduce an isotropic Gaussian transport metric defined on first and second moments and compare it with a fidelity-based benchmark in a thermal-loss model. We then show analytically that, within an isotropic thermal-reference geometry, this metric locally disfavors squeezing relative to coherent displacement, thereby distinguishing global phase-space robustness from directional metrological advantage. We next introduce a projected transport metric adapted to quadrature-resolved measurements and show that its optimization over the measurement quadrature is analytically tractable, reducing to a boundary choice between the principal axes of the output noise ellipse. We further extend the framework to a measurement-aware metric defined on detector output statistics, and derive an explicit Gaussian formula for a noisy quadrature measurement chain. Finally, in a fading setting, we show that the isotropic metric and the projected metric aligned with the coherent displacement retain first-order sensitivity to the transmissivity in the strong-loss regime, whereas the orthogonal projected metric is compressed to second order. These results support a hierarchical view of transport-based distinguishability in quantum sensing, ranging from global robustness indicators to measurement-adapted operational metrics.
\end{abstract}

%\tableofcontents

% =================================================
\section{Introduction}
% =================================================

\subsection{Context}
Quantum sensing and quantum illumination aim to exploit nonclassical states and measurements in order to improve detection or estimation tasks in noisy environments \cite{Lloyd2008,Tan2008,Giovannetti2004,Giovannetti2011,Demkowicz2015}. More broadly, they belong to a wider effort to identify which quantum resources can enhance measurement protocols beyond what is achievable with classical probes of comparable energy or bandwidth. In practice, however, such advantages must be assessed in the presence of loss, thermal noise, imperfect detection, and, in many relevant scenarios, channel fluctuations. This tension between fundamental quantum limits and realistic sensing architectures has become increasingly central in recent work on noisy quantum metrology, quantum target detection, and quantum radar-inspired protocols \cite{Demkowicz2015,MacLellan2024,Chen2024,Karsa2024,Shapiro2024,Borderieux2022}.

In this context, distinguishability between a signal-bearing output state and a reference background state is most naturally quantified through overlap-based quantities, such as quantum fidelity or Chernoff-type exponents \cite{Helstrom1976,Holevo2011,Jozsa1994,Audenaert2007}. These objects are fundamental in quantum detection theory because they are directly connected to state discrimination performance and ultimate error bounds. In Gaussian settings, they are also analytically tractable and often serve as natural benchmarks for comparing sensing protocols \cite{Banchi2015,Weedbrook2012}. From this perspective, overlap-based distinguishability remains the standard language for assessing how difficult it is, in principle, to discriminate two candidate states.

At the same time, realistic sensing tasks often involve requirements that are not fully captured by state-overlap alone. In strongly dissipative or fluctuating environments, the received state may approach the noisy background over a broad region of parameter space, so that overlap-based quantities, while perfectly well defined, become numerically compressed and therefore less informative for optimization, monitoring, or adaptive control. This issue is especially relevant in lossy target-detection scenarios and free-space architectures, where practical decision strategies must remain stable under changing channel conditions \cite{Borderieux2022,Vasylyev2012,Usenko2012,Vasylyev2016,Heim2014,Bohmann2016,Semenov2012}. In such settings, it is natural to seek contrast functions that complement ultimate discrimination benchmarks by preserving a broader and more exploitable variation across operating regimes.

A natural candidate for such a complementary viewpoint is provided by transport-based geometry. Rather than quantifying only how strongly two states overlap, transport-based criteria quantify how much geometric deformation is required to map one object into another. In the Gaussian regime, this leads to explicit formulas involving first and second moments and connects quantum sensing questions with tools from Gaussian quantum information and optimal transport \cite{Braunstein2005,Weedbrook2012,Serafini2017,Villani2009,Gelbrich1990,Bhatia2019}. More generally, transport ideas are attractive because they can be formulated at different levels of description, ranging from state-space geometry to measurement-induced output statistics \cite{Chen2018,DePalmaTrevisan2021}.

\subsection{Contributions of the paper}

The purpose of this work is to investigate transport-based distinguishability criteria for lossy quantum sensing, with a focus on how the chosen geometric level affects operational insights. We argue that a global metric on state space encodes different information than a metric defined after a specific measurement chain. To explore this, we develop a three-level hierarchy of transport-based criteria:

\begin{enumerate}[label=(\roman*)]
\item an isotropic state-space metric, defined on Gaussian first and second moments;
\item a projected transport metric, adapted to quadrature-resolved measurements;
\item a measurement-aware metric, defined directly on the classical statistics of a noisy detection chain.
\end{enumerate}

Using the Gaussian single-mode setting as a testbed, we show that global isotropic transport is primarily a signature of robustness, whereas the projected and measurement-aware levels are better suited to capture directional metrological advantages, such as those provided by squeezing. We provide analytical formulas for these metrics and investigate their optimization over measurement quadratures. Finally, we analyze their sensitivity in the strong-loss and fading regimes, showing how the hierarchical approach allows for a more nuanced assessment of contrast in realistic sensing environments.

The paper is organized as follows. \Cref{sec:overlap} reviews overlap-based measures and their limitations. \Cref{sec:isotropic,sec:nonoptimality} introduce the isotropic transport metric and discuss its behavior regarding squeezing. \Cref{sec:projected,sec:measurement} develop the projected and measurement-aware frameworks and adaptive decision variables. \Cref{sec:fading} analyzes fading channels.

%The paper is organized as follows. \Cref{sec:overlap} reviews overlap-based distinguishability measures and their limitations in lossy sensing. \Cref{sec:isotropic} introduces the isotropic Gaussian transport metric. \Cref{sec:nonoptimality} derives an analytical non-optimality result for squeezing in the isotropic setting. \Cref{sec:projected} develops projected transport geometry and its quadrature-selection structure. \Cref{sec:measurement} introduces the measurement-aware framework and its first analytically tractable noisy-chain extension. \Cref{sec:fading} analyzes fading channels and adaptive decision variables. \Cref{sec:discussion} closes the paper with a discussion of scope, limitations, and perspectives.

% =================================================
\section{Overlap-based distinguishability in lossy quantum sensing}
%\section{Limits of Overlap-Based Distinguishability} autre proposition
\label{sec:overlap}
% =================================================

A natural way to quantify distinguishability between two quantum states $\rho_0$ and $\rho_1$ is to evaluate how strongly they overlap in Hilbert space. This viewpoint underlies some of the most standard quantities in quantum information, quantum detection theory, and quantum sensing \cite{Helstrom1976,Holevo2011,Jozsa1994,Audenaert2007}. One of the most widely used is the quantum fidelity
\begin{equation}
F(\rho_0,\rho_1)
=
\left(
\Trc
\sqrt{
\sqrt{\rho_0}\,\rho_1\,\sqrt{\rho_0}
}
\right)^2,
\label{eq:fidelity_def_section2}
\end{equation}
which measures the proximity of the two states in a symmetric and operationally meaningful way. It satisfies
\begin{equation}
0\le F(\rho_0,\rho_1)\le 1,
\end{equation}
with equality to one if and only if the two states coincide. Closely related quantities include Bhattacharyya-type and Chernoff-type exponents, which play a central role in binary discrimination and hypothesis testing. A particularly simple example is the fidelity exponent
\begin{equation}
\xi_F(\rho_0,\rho_1):=-\log F(\rho_0,\rho_1),
\label{eq:fidelity_exponent_section2}
\end{equation}
while the quantum Chernoff bound is based on
\begin{equation}
\xi_{\mathrm{QCB}}(\rho_0,\rho_1)
=
-\log\!\left(
\inf_{0\le s\le 1}
\Trc\bigl(\rho_0^s \rho_1^{\,1-s}\bigr)
\right).
\label{eq:qcb_def_section2}
\end{equation}
%These quantities are fundamental because they connect state distinguishability with achievable discrimination performance and asymptotic error exponents \cite{Helstrom1976,Holevo2011,Audenaert2007}.

While these quantities provide the foundational link between state distinguishability and ultimate asymptotic error bounds \cite{Helstrom1976,Holevo2011,Audenaert2007}, they do not always translate into informative cost functions for practical optimization. This limitation becomes particularly severe in strongly lossy sensing scenarios.

Consider a received state $\rho_{\text{out}}(\eta)$ propagating through a thermal-loss channel of transmissivity $\eta$, with $\rho_\text{ref}$ denoting the thermal background used as a reference. In the strong-loss regime ($\eta\ll 1$), the received state rapidly approaches the background. Consequently, overlap measures become numerically compressed: the fidelity approaches unity, and the associated exponents vanish. While mathematically well-defined, they provide poor dynamic range and weak numerical gradients between neighboring operating points, which hinders parameter sweeps or real-time adaptive control.

This highlights a central distinction for the present work: a metric can be an optimal \emph{fundamental discrimination benchmark} yet fail as an \emph{operational contrast function}. To maintain stability and sensitivity under loss, fading, or detector imperfections, one requires a distance measure that preserves a broader and more exploitable variation.

Transport-based geometry offers precisely this complementary perspective. Instead of evaluating the overlap between two states, transport criteria quantify the geometric deformation required to map one into the other. Rooted in optimal transport and Wasserstein geometry \cite{Villani2009,Gelbrich1990}, these metrics evaluate distinguishability through the distance between covariance matrices and displacement vectors in the Gaussian regime \cite{DePalmaTrevisan2021}. By natively separating displacement from noise-shape effects, transport-based quantities vary smoothly with channel parameters, making them natural candidates to build a hierarchy of operational contrast functions for lossy sensing.
\section{Isotropic Transport Metric} % in Gaussian phase space
\label{sec:isotropic}
% =================================================

%We begin at the simplest analytically tractable level of the problem, namely single-mode Gaussian state-space transport in a thermal-loss model. Gaussian states are fully characterized by their first and second moments, Gaussian channels act affinely on these moments, and many distinguishability formulas can be written in closed form \cite{Braunstein2005,Weedbrook2012,Serafini2017,Holevo2001,Holevo2007}. This makes the Gaussian regime a natural testbed for introducing transport-based contrast functions in lossy sensing.

\begin{figure}[h]
    \centering
    \includegraphics[width=0.7\linewidth]{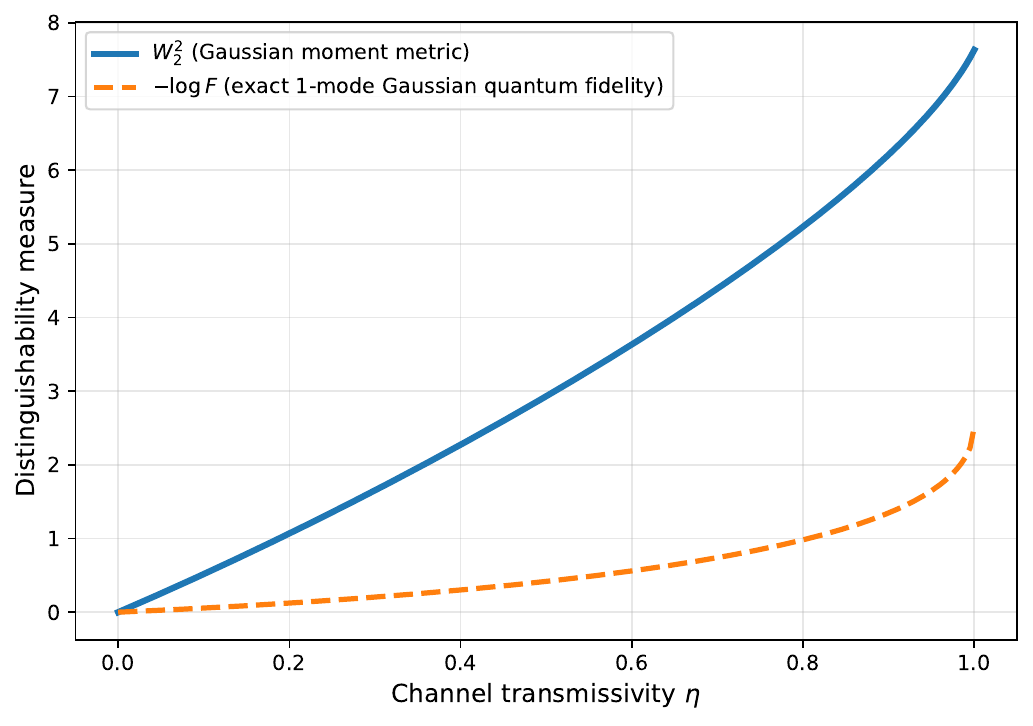}
\caption{\small Comparison between the isotropic transport metric and the fidelity exponent under thermal loss ($N_{\mathrm{tot}}=5$, $\beta=0.5$, $n_{\mathrm{th}}=2$). Both quantities vanish at $\eta=0$, when the output state coincides with the thermal reference. As the transmissivity increases, the isotropic transport metric $d_{\mathrm{iso}}^2$ displays a broader dynamic range than the fidelity exponent $\xi_F=-\log F$.}
\label{fig:isotropic_vs_fidelity}
\end{figure}

We first evaluate transport-based distinguishability in its most analytically transparent setting: single-mode Gaussian phase space subject to thermal loss. Because Gaussian states and channels are entirely determined by their first and second moments (and affine transformations thereof) they naturally reduce complex state-space transport into finite-dimensional geometric operations \cite{Braunstein2005,Weedbrook2012,Serafini2017,Holevo2001,Holevo2007}. This algebraic structure makes the Gaussian regime the ideal testbed to derive closed-form operational contrast functions.

%We consider a single bosonic mode with quadrature vector $R=(x,p)^T$, satisfying $[x,p]=i$ where $i^2=-1$. 
%A Gaussian state is completely specified by its first-moment vector $\nu$ 
%$\nu=\langle R\rangle\in\mathbb{R}^2$, 
%and its covariance matrix $V$.
%$V_{jk} = \frac12 \left\langle \left\{R_j-\nu_j,\;R_k -\nu_k\right\} \right\rangle$, with $(j,k)\in\lbrace 1,2\rbrace$.

%Throughout the paper, we adopt the standard convention
%\begin{equation}
%V_{\vac}=\frac12 I_2,
%\label{eq:vacuum_covariance}
%\end{equation}
%so that a thermal state with mean photon number $\nth$ has covariance
%\begin{equation}
%V_{\mathrm{th}}
%=
%\left(\nth+\frac12\right)I_2.
%\label{eq:thermal_covariance}
%\end{equation}
%Let $(\nu_{\mathrm{in}},V_{\mathrm{in}})$ denote the moments of an input Gaussian probe. We consider propagation through a single-mode thermal-loss channel with transmissivity $\etal\in[0,1]$ and environmental thermal occupancy $\nth$. At the level of first and second moments, the channel acts as
%\begin{align}
%\nu_{\out} &= \sqrt{\etal}\,\nu_{\mathrm{in}},
%\label{eq:channel_mean}
%\\
%V_{\out} &= \etal V_{\mathrm{in}} + (1-\etal)\left(\nth+\frac12\right)I_2.
%\label{eq:channel_covariance}
%\end{align}
%Thus the mean field is attenuated by $\sqrt{\etal}$, while the covariance matrix is mixed with an isotropic thermal background of variance $(\nth+\frac12)I_2$. This affine structure makes the Gaussian thermal-loss model analytically convenient \cite{Holevo2001,Holevo2007,Weedbrook2012}.

Let $(\nu_{\mathrm{in}},V_{\mathrm{in}})$  denote the mean displacement and covariance matrix of an input Gaussian probe. Upon propagation through a single-mode thermal-loss channel with transmissivity $\etal\in[0,1]$ and mean environmental photon number $\nth$, the output moments transform as
\begin{align}
\nu_{\out} = \sqrt{\etal}\,\nu_{\mathrm{in}},
%\label{eq:channel_mean}
\qquad
V_{\out} = \etal V_{\mathrm{in}} + (1-\etal)V_{\mathrm{th}}.
\label{eq:channel_outputs}
\end{align}
Here, $V_{\mathrm{th}}$ denotes the covariance of the environmental thermal noise. Adopting the standard phase-space convention $V_{\vac}=\frac12 I_2$ for the vacuum state, we explicitly have
\begin{equation}
V_{\mathrm{th}} = \left(\nth+\frac12\right)I_2.
\label{eq:thermal_covariance}
\end{equation}
This thermal state naturally serves as our reference background in the sensing problem. This choice is physically motivated by typical environmental contamination, and its preserved rotational symmetry provides an ideal foundation for isotropic transport geometry.

We now specify the parametrization of the quantum probes. The total available input energy $\Ntot$ is distributed between coherent displacement ($\Ndisp$) and quadrature squeezing ($\Nsq$) according to a fraction $\beta\in[0,1]$, such that
\begin{equation}
\Ndisp = (1-\beta)\Ntot, \qquad \Nsq = \beta\Ntot.
\label{eq:energy_split}
\end{equation}
These energies define the real coherent amplitude $\alpha = \sqrt{\Ndisp}$ and the squeezing parameter $r\geq 0$ via $\Nsq=\sinh^2(r)$. Assuming without loss of generality that the probe is displaced and squeezed along the $x$-quadrature, its initial moments read
\begin{equation}
\nu_{\mathrm{in}} =
\begin{pmatrix}
\sqrt{2}\,\alpha \\
0
\end{pmatrix},
\qquad
V_{\mathrm{in}} = \frac{1}{2}
\begin{pmatrix}
e^{-2r} & 0 \\
0 & e^{2r}
\end{pmatrix}.
\label{eq:input_moments_squeezed}
\end{equation}

Applying the channel transformations from \cref{eq:channel_outputs}, the received state is characterized by the output moments
\begin{equation}
\nu_{\out} =
\begin{pmatrix}
\sqrt{2\etal\Ndisp} \\
0
\end{pmatrix},
\qquad
V_{\out} =
\begin{pmatrix}
v_x^{\out} & 0 \\
0 & v_p^{\out}
\end{pmatrix},
\label{eq:output_moments}
\end{equation}
whose respective variances are explicitly given by
\begin{align}
v_x^{\out} = \frac{\etal}{2}e^{-2r} + (1-\etal)\left(\nth+\frac{1}{2}\right), \qquad
v_p^{\out} = \frac{\etal}{2}e^{2r} + (1-\etal)\left(\nth+\frac{1}{2}\right). \label{eq:vxp_out}
\end{align}

The sensing task thus amounts to distinguishing this received state from the pure channel background. Consistent with our earlier discussion, the reference state is the centered thermal noise, characterized by
\begin{equation}
\nu_{\refstate} = 0, \qquad V_{\refstate} = \left(\nth+\frac{1}{2}\right)I_2.
\label{eq:reference_state_moments}
\end{equation}

We now introduce the first level of our hierarchy: a global isotropic transport metric defined directly on the Gaussian moments. For two states $\rho_0$ and $\rho_1$ with respective moments $(\nu_0, V_0)$ and $(\nu_1, V_1)$, we define
\begin{equation}
\Wiso(\rho_0,\rho_1)^2
:=
\|\nu_0-\nu_1\|^2
+
\Trc\!\left(
V_0+V_1
-
2\bigl(V_0^{1/2}V_1V_0^{1/2}\bigr)^{1/2}
\right).
\label{eq:diso_def}
\end{equation}
The covariance contribution corresponds to the standard Bures--Wasserstein (or Gelbrich) distance for positive definite matrices \cite{Gelbrich1990,Bhatia2019,Villani2009}. In the present work, we adopt this expression as an operational, state-space transport proxy tailored for lossy sensing, rather than engaging with broader definitions of universal quantum Wasserstein distances \cite{Chen2018,DePalmaTrevisan2021}.

In the single-mode scenario considered here, the received state and the reference state both exhibit diagonal covariance matrices in the $(x,p)$ quadrature basis. The metric therefore gracefully simplifies to the sum of classical Wasserstein distances for the independent components:
\begin{equation}
\Wiso(\rho_0,\rho_1)^2
=
\|\nu_0-\nu_1\|^2
+
\sum_{k \in \{x,p\}}
\left(
\sqrt{v_k^{(0)}} - \sqrt{v_k^{(1)}}
\right)^2.
\label{eq:diso_diag}
\end{equation}

Applying this to the discrimination task between the received probe $\rho_{\out}$ and the background $\rho_{\refstate}$, and letting $n:=\nth+1/2$ denote the reference variance, we obtain the explicit formula:
\begin{equation}
\Wiso(\rho_{\refstate},\rho_{\out})^2
=
2\etal\Ndisp
+
\left(\sqrt{v_x^{\out}}-\sqrt{n}\right)^2
+
\left(\sqrt{v_p^{\out}}-\sqrt{n}\right)^2.
\label{eq:diso_explicit}
\end{equation}
This expression makes the geometry of the metric transparent: the first term is purely displacement-driven, while the remaining two terms quantify the mismatch between the output noise ellipse and the isotropic thermal reference.

% Because both quadratures are treated symmetrically, \(\Wiso\) measures a global transport cost in phase space rather than the distinguishability relevant to a specific measurement quadrature. A reduction of noise along one quadrature through squeezing does not automatically improve the isotropic transport metric, since the metric also accounts for the corresponding anti-squeezing along the conjugate quadrature. In this sense, isotropic transport geometry should be understood as a global state-space notion of robustness rather than as a surrogate for a measurement-dependent signal-to-noise ratio.

% To assess its practical usefulness, we compare \(\Wiso\) with the fidelity exponent \(\xi_F\) as defined in \eqref{eq:fidelity_exponent_section2}. 
% % \begin{equation}
% % \xi_F(\rho_0,\rho_1):=-\log F(\rho_0,\rho_1),
% % \label{eq:fidelity_exponent_def}
% % \end{equation}
% %where $F(\rho_0,\rho_1)$ denotes the quantum fidelity between the two states \cite{Jozsa1994,Banchi2015}. 
% For the thermal-loss model considered here, both \(d_{\mathrm{iso}}^2\) and \(\xi_F\) vanish at \(\etal=0\), when the output state coincides with the thermal reference. As the transmissivity increases, however, \(d_{\mathrm{iso}}^2\) typically exhibits a broader variation than \(\xi_F\), as shown in \cref{fig:isotropic_vs_fidelity}. This should not be interpreted as a universal superiority statement, but rather as evidence that the isotropic transport metric can provide a more useful control-oriented contrast function in lossy regimes.

Because both quadratures are treated symmetrically, \(\Wiso\) measures a global transport cost in phase space rather than the distinguishability relevant to a specific measurement quadrature. Reducing noise along one quadrature via squeezing does not necessarily lower the isotropic transport cost, as the metric strictly penalizes the associated anti-squeezing along the conjugate variable. In this sense, isotropic transport geometry should be understood as a global state-space notion of robustness, rather than as a surrogate for a measurement-dependent signal-to-noise ratio.

To assess its practical usefulness, we compare \(\Wiso^2\) with the fidelity exponent \(\xi_F\) defined in \cref{eq:fidelity_exponent_section2}. For the thermal-loss model considered here, both quantities rightfully vanish at \(\etal=0\), when the output state becomes indistinguishable from the thermal reference. Crucially, however, as one approaches this strong-loss regime, \(\Wiso^2\) avoids the severe numerical compression that afflicts \(\xi_F\). Instead, it preserves a much broader and exploitable dynamic range across transmissivity values, as illustrated in \cref{fig:isotropic_vs_fidelity}. This behavior confirms that the isotropic transport metric provides a robust, control-oriented contrast function precisely where standard overlap measures lose their gradient.

% =================================================
%\section{Analytical Limits of Isotropic Optimization}
\section{Non-Optimality of Squeezing under Isotropic Transport}
\label{sec:nonoptimality}

We now investigate how the isotropic metric responds when a fraction $\beta\in[0,1]$ of a fixed total input energy $\Ntot$ is reallocated from coherent displacement to quadrature squeezing. This parametrization smoothly interpolates between a purely coherent probe ($\beta=0$) and a displaced squeezed probe ($\beta>0$) \cite{Braunstein2005,Weedbrook2012,Serafini2017}. Operationally, we ask the following question: does allocating an infinitesimal amount of energy to squeezing improve or degrade the global isotropic transport cost introduced in \cref{sec:isotropic}?

To answer this, we evaluate the sensitivity of the explicit metric to the squeezing fraction $\beta$ in the vicinity of $\beta=0^+$. As detailed in Appendix \ref{app:proof_nonoptimality}, substituting the output moments into the metric and performing a local expansion yields the exact analytical derivative
\begin{equation}
\frac{\partial}{\partial \beta} \Wiso(\rho_{\refstate},\rho_{\out})^2 \Big|_{\beta=0^+} = \Ntot \etal \sqrt{n}\, s_0^{-3/2} \left( \frac{\etal}{2}-2s_0 \right),
\end{equation}
where $n = \nth + 1/2$ is the thermal reference variance and $s_0 := \frac{\etal}{2} + (1-\etal)n$ is the variance of the received state in the absence of squeezing.

\begin{proposition}[Local non-optimality of squeezing in the isotropic model]
\label{prop:local_nonoptimality}
For the isotropic thermal-reference model with vacuum variance $1/2$, introducing a small amount of squeezing strictly degrades the isotropic transport metric. Specifically, for any physical choice of parameters $\etal\in(0,1]$, $\nth\ge 0$, and $\Ntot>0$, we have
\begin{equation}
\frac{\partial}{\partial \beta} \Wiso(\rho_{\refstate},\rho_{\out})^2 \Big|_{\beta=0^+} < 0.
\end{equation}
\end{proposition}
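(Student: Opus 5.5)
The plan is to take the derivative formula stated just before the proposition, which is derived in Appendix~\ref{app:proof_nonoptimality}, and reduce the claim to a sign check on its factors. Before doing that, I will re-derive the formula briefly. The reason is that the expansion near $\beta=0^+$ is singular in the natural variable: from $\sinh^2 r=\beta\Ntot$ we get $r\simeq\sqrt{\beta\Ntot}$, so $e^{\pm 2r}$ contains terms of order $\sqrt{\beta}$. The first step is to confirm that these terms cancel, so that a one-sided derivative in $\beta$ exists.

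\textbf{Re-deriving the derivative.} Write $f(v):=(\sqrt v-\sqrt n)^2$. Then $f'(v)=1-\sqrt{n/v}$ and $f''(v)=\tfrac12\sqrt n\,v^{-3/2}$. From \cref{eq:vxp_out},
\begin{equation}
v_{x,p}^{\out}=s_0\mp \etal r+\etal r^2+O(r^3).
\end{equation}
Expanding $f(v_x^{\out})+f(v_p^{\out})$ about $s_0$, the terms linear in $r$ cancel by the $x\leftrightarrow p$ symmetry. What remains is
\begin{equation}
f(v_x^{\out})+f(v_p^{\out})=2f(s_0)+2\etal r^2 f'(s_0)+\etal^2 r^2 f''(s_0)+O(r^3).
\end{equation}
Since $r^2=\beta\Ntot+O(\beta^2)$ and $O(r^3)=o(\beta)$, this part is differentiable in $\beta$ at $0^+$. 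Its derivative there is
\begin{equation}
\Ntot\bigl(2\etal f'(s_0)+\etal^2 f''(s_0)\bigr).
\end{equation}
The displacement term $2\etal(1-\beta)\Ntot$ contributes $-2\etal\Ntot$. Adding the two contributions, the terms $\pm 2\etal\Ntot$ cancel, and we recover exactly
\begin{equation}
\Ntot\etal\sqrt n\,s_0^{-3/2}\Bigl(\frac{\etal}{2}-2s_0\Bigr).
\end{equation}

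\textbf{Sign check.} Under the hypotheses $\etal\in(0,1]$, $\nth\ge0$ and $\Ntot>0$, we have $n=\nth+\tfrac12\ge\tfrac12>0$. Also $s_0=\tfrac{\etal}{2}+(1-\etal)n\ge\tfrac{\etal}{2}>0$. Hence the prefactor $\Ntot\etal\sqrt n\,s_0^{-3/2}$ is strictly positive. For the bracket,
\begin{equation}
\frac{\etal}{2}-2s_0=-\frac{\etal}{2}-2(1-\etal)n\le-\frac{\etal}{2}<0.
\end{equation}
The derivative is therefore strictly negative, which proves the proposition.

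The only genuinely delicate step is the $\sqrt\beta$ bookkeeping in the first part. There one must check that the $O(r)$ terms cancel and that the remainder is $o(\beta)$. This requires noting that $f$ is smooth near $s_0>0$, so Taylor's theorem with a uniform remainder applies. The sign argument itself is elementary.
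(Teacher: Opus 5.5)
Your proof is correct and follows the paper's own route. You do a second-order Taylor expansion of $f(s)=(\sqrt{s}-\sqrt{n})^2$ about $s_0$, the linear terms cancel by the $x\leftrightarrow p$ symmetry, the $\pm 2\eta\Ntot$ contributions cancel against the displacement term, and the sign check on $\frac{\eta}{2}-2s_0$ is the same; the only cosmetic difference is that you expand in $r$ and then use $r^2=\beta\Ntot+O(\beta^2)$, whereas the paper expands directly in $a=\sinh r=\sqrt{\beta\Ntot}$ via $e^{\pm 2r}=(\sqrt{1+a^2}\pm a)^2$.
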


\begin{proof}
The result follows directly from the exact derivative. The prefactor $\Ntot \etal \sqrt{n}\, s_0^{-3/2}$ is strictly positive. Expanding the parenthetical term yields
\begin{equation}
\frac{\etal}{2}-2s_0 = -\frac{\etal}{2}-2(1-\etal)n,
\end{equation}
which is strictly negative for all physical values of the channel parameters. Full mathematical details of the Taylor expansion are deferred to Appendix \ref{app:proof_nonoptimality}.
\end{proof}

This result should not be interpreted as a general statement against squeezing, which is well known to provide directional metrological advantages in quadrature-resolved settings, notably in homodyne detection and sub-shot-noise sensing \cite{Caves1981,Braunstein2005,Giovannetti2004,Giovannetti2011,Demkowicz2015}. Rather, it highlights that the isotropic metric evaluates a \emph{global} phase-space deformation relative to a symmetric thermal background. Because $\Wiso$ penalizes anti-squeezing exactly as much as it rewards squeezing, it does not automatically reflect the operational geometry of specific quadrature measurements. The negative derivative is therefore not a pathology of the transport approach, but a structural signature of what global state-space robustness actually entails.

Furthermore, the analytical result of \cref{prop:local_nonoptimality} rigorously explains the numerical optimization landscapes obtained for this model. Since the right derivative at $\beta=0$ is strictly negative, the coherent boundary is always locally optimal. Numerical evaluations confirm that this local property extends globally: the optimal strategy remains purely displacement-dominated. As shown in \cref{fig:isotropic_landscape}, the optimal squeezing fraction $\beta_{\mathrm{opt}}$ maximizing the isotropic metric remains rigidly pinned to zero as a function of the channel transmissivity. Figure \ref{fig:isotropic_parametric} demonstrates that this behavior persists across a broad range of total energies and thermal occupancies. This perfect agreement between analysis and numerics confirms that the isotropic metric does not exhibit any squeezing-favorable crossover. Global isotropic transport fundamentally favors coherent displacement.

\begin{figure}[htbp]
    \centering
    
    % Sous-figure (a)
    \begin{subfigure}[b]{0.48\textwidth}
        \centering
        \includegraphics[width=\linewidth]{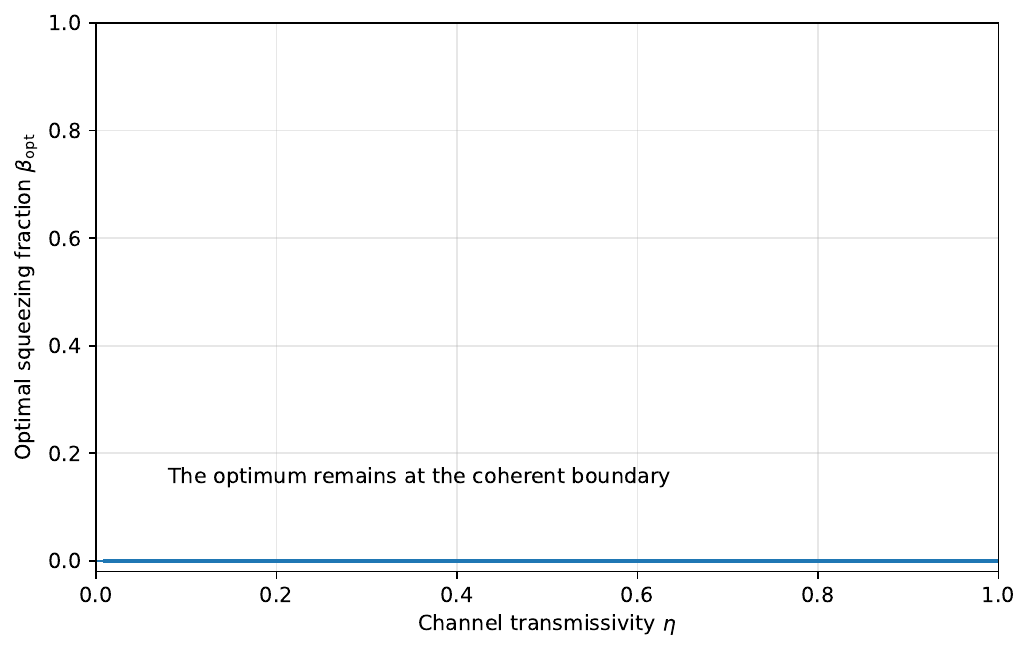}
        \caption{\small Fixed parameters ($N_{\mathrm{tot}}=10$, $n_{\mathrm{th}}=0.1$).}
        \label{fig:isotropic_landscape}
    \end{subfigure}
    \hfill % Espace élastique entre les deux figures
    % Sous-figure (b)
    \begin{subfigure}[b]{0.48\textwidth}
        \centering
        \includegraphics[width=\linewidth]{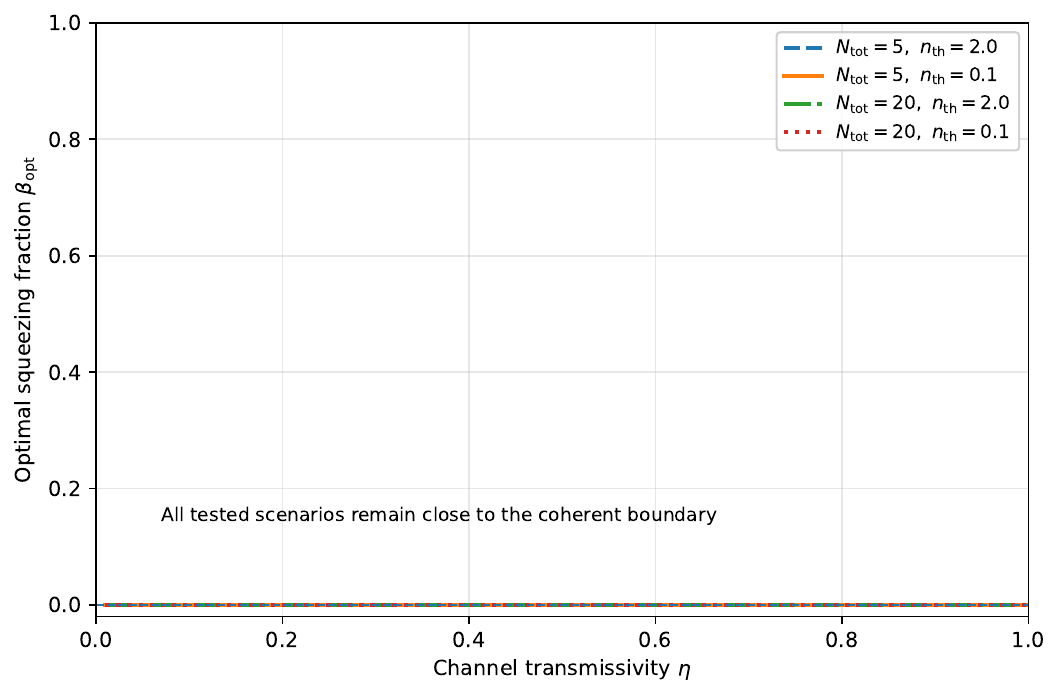}
        \caption{\small Parametric sweep.}
        \label{fig:isotropic_parametric}
    \end{subfigure}
    
    \caption{\small Optimal squeezing fraction $\beta_{\mathrm{opt}}$ maximizing the isotropic transport metric as a function of the channel transmissivity $\eta$. \textbf{(a)} For fixed parameters, the optimum remains pinned to the coherent boundary $\beta=0$ throughout the explored range, in agreement with \cref{prop:local_nonoptimality}. \textbf{(b)} This behavior persists across multiple energy ($N_{\mathrm{tot}}$) and thermal occupancy ($n_{\mathrm{th}}$) regimes. Together, these results confirm that the isotropic metric does not exhibit any squeezing-favorable crossover.}
    \label{fig:isotropic_combined}
\end{figure}

Ultimately, global isotropic transport and directional metrological advantage represent distinct geometric concepts. To capture the operational benefits of squeezing, the distinguishability measure must reflect the specific measurement being performed. This motivates the next level of our hierarchy: a transport metric projected directly onto the observed quadrature. By shifting from a symmetric phase-space cost to a measurement-adapted geometry, we establish a natural bridge between state-space transport and homodyne-resolved sensing.

% =================================================
%\section{Projected Transport Geometry for Quadrature-Resolved Sensing}
\section{Projected Transport Metric}
\label{sec:projected}

As established in \cref{sec:nonoptimality}, evaluating distinguishability through a fully isotropic phase-space cost fails to capture the metrological advantages of squeezing. This outcome is physically sound: in continuous-variable architectures, particularly homodyne-based receivers, information is extracted along a specific field quadrature rather than from the full state-space geometry \cite{Braunstein2005,Caves1981,Giovannetti2004,Giovannetti2011,Leonhardt2008,Paris2004}. To accurately reflect this operational reality, we introduce a projected transport metric. Here, the received state is compared to the reference background not via a global geometric distance, but through the classical probability distribution induced by a designated quadrature measurement.

For any observation angle $\theta\in[0,2\pi)$, we define the rotated quadrature operator
\begin{equation}
X_\theta = x\cos\theta + p\sin\theta,
\label{eq:quadrature_theta}
\end{equation}
with the associated unit vector $e_\theta = (\cos\theta, \sin\theta)^T$. For a Gaussian state $\rho$ with moment vector $\nu$ and covariance matrix $V$, measuring this quadrature yields a one-dimensional normal distribution
\begin{equation}
P_\rho^{(\theta)} = \mathcal{N}\bigl(m_\theta(\rho),\, v_\theta(\rho)\bigr),
\label{eq:quadrature_distribution}
\end{equation}
characterized by the scalar mean and variance
\begin{equation}
m_\theta(\rho) = e_\theta^T \nu, \qquad v_\theta(\rho) = e_\theta^T V e_\theta.
\label{eq:quadrature_statistics}
\end{equation}

We then define the projected transport metric as the classical quadratic Wasserstein distance $W_2$ on the real line between these measurement-induced distributions \cite{Villani2009}:
\begin{equation}
\Wtheta(\rho_0,\rho_1)^2
:=
W_2\!\left(P_{\rho_0}^{(\theta)},\,P_{\rho_1}^{(\theta)}\right)^2.
\label{eq:dtheta_def}
\end{equation}
For one-dimensional Gaussian laws, this distance admits the closed-form expression
\begin{equation}
\Wtheta(\rho_0,\rho_1)^2
=
\left(m_\theta(\rho_0)-m_\theta(\rho_1)\right)^2
+
\left(\sqrt{v_\theta(\rho_0)}-\sqrt{v_\theta(\rho_1)}\right)^2.
\label{eq:dtheta_gaussian}
\end{equation}
Unlike the isotropic proxy $\Wiso$, the projected metric $\Wtheta$ depends explicitly on the observation angle, making it a geometry natively adapted to homodyne sensing protocols.

%%%%%%%%%%%%%%%%%%%%%%

We now apply this framework to the lossy squeezed-displaced Gaussian model. Recalling the output and reference moments from \cref{sec:isotropic}, the projected statistics along an arbitrary observation angle $\theta$ are straightforward to evaluate. The mean displacements project as
\begin{equation}
m_\theta(\rho_{\out}) = \sqrt{2\etal\Ndisp}\cos\theta, \qquad m_\theta(\rho_{\refstate}) = 0,
\end{equation}
while the variances, owing to the diagonal structure of the covariance matrices in the $(x,p)$ basis, become
\begin{equation}
v_\theta(\rho_{\out}) = v_x^{\out}\cos^2\theta + v_p^{\out}\sin^2\theta, \qquad v_\theta(\rho_{\refstate}) = n.
\end{equation}

Substituting these classical statistics into \cref{eq:dtheta_gaussian} yields the explicit projected transport metric:
\begin{equation}
\Wtheta(\rho_{\refstate},\rho_{\out})^2
=
2\etal\Ndisp\cos^2\theta
+
\left(
\sqrt{v_x^{\out}\cos^2\theta+v_p^{\out}\sin^2\theta}
-\sqrt{n}
\right)^2.
\label{eq:dtheta_explicit}
\end{equation}
%\Cref{eq:dtheta_explicit} is the central formula of this section. It shows explicitly how the choice of measurement quadrature actively reshapes the transport geometry seen by the receiver.
Crucially, this explicit form demonstrates how the choice of measurement quadrature actively reshapes the transport geometry seen by the receiver.

The departure from the isotropic framework is immediate. Whereas $\Wiso$ inherently offsets any variance reduction with a symmetric penalty from the anti-squeezed conjugate, $\Wtheta$ isolates the geometry of the specific measured quadrature. This approach naturally mirrors homodyne detection, where the local-oscillator phase dictates the observed field \cite{Braunstein2005,Caves1981,Leonhardt2008}. Consequently, the optimal allocation between coherent displacement and squeezing becomes explicitly $\theta$-dependent, meaning the metrological advantage of squeezed probes is now fundamentally tied to the measurement basis. We emphasize, however, that $\Wtheta$ should not be conflated with a traditional signal-to-noise ratio: it remains a transport-based distinguishability measure of classical output distributions, rather than a direct estimator of parameter precision.

A major operational advantage of \cref{eq:dtheta_explicit} is that it makes the optimization over the measurement quadrature analytically tractable. 

\begin{proposition}[Boundary optimization of the projected metric]
\label{prop:boundary_quadrature}
For any fixed set of state parameters $(\eta,\Ndisp,r,n)$, the squared projected transport metric $\Wtheta(\rho_{\refstate},\rho_{\out})^2$ is a convex function of $\cos^2\theta$. Consequently, the optimal measurement quadrature is always aligned with one of the principal axes of the output noise ellipse:
\begin{equation}
\theta_{\mathrm{opt}} \in \left\{0, \frac{\pi}{2}\right\}.
\end{equation}
Equivalently, the global continuous search reduces to a discrete binary choice:
\begin{equation}
\max_{\theta} \Wtheta(\rho_{\refstate},\rho_{\out})^2
=
\max\!\left\{
\Wtheta(\rho_{\refstate},\rho_{\out})^2 \Big|_{\theta=0},\,
\Wtheta(\rho_{\refstate},\rho_{\out})^2 \Big|_{\theta=\pi/2}
\right\}.
\end{equation}
% Equivalently, the global continuous search reduces to a discrete binary choice:
% \begin{equation}
% \max_{\theta} \Wtheta(\rho_{\refstate},\rho_{\out})^2
% =
% \max\!\left\{
% \mathcal{W}_{0}(\rho_{\refstate},\rho_{\out})^2,\,
% \mathcal{W}_{\pi/2}(\rho_{\refstate},\rho_{\out})^2
% \right\}.
% \end{equation}
\end{proposition}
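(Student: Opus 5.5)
Set $u := \cos^2\theta \in [0,1]$, so that $\sin^2\theta = 1-u$. By \cref{eq:dtheta_explicit}, the squared projected metric then depends on $\theta$ only through $u$:
\begin{equation*}
f(u) := 2\etal\Ndisp\, u + \Bigl(\sqrt{g(u)}-\sqrt{n}\Bigr)^2,
\qquad
g(u) := v_p^{\out} + \bigl(v_x^{\out}-v_p^{\out}\bigr)u .
\end{equation*}
Here $g$ is affine in $u$ and strictly positive on $[0,1]$, since both $v_x^{\out}$ and $v_p^{\out}$ are positive by \cref{eq:vxp_out}. The first term of $f$ is linear in $u$, so all convexity questions reduce to the second term. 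Expanding it gives
\begin{equation*}
\Bigl(\sqrt{g}-\sqrt{n}\Bigr)^2 = g(u) + n - 2\sqrt{n}\,\sqrt{g(u)} .
\end{equation*}
The term $g+n$ is affine. The function $u\mapsto\sqrt{g(u)}$ is concave, because it is the square root (a concave, nondecreasing function) composed with an affine map. Hence $-2\sqrt{n}\sqrt{g(u)}$ is convex.

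For an explicit check, I would differentiate twice:
\begin{equation*}
\frac{d^2}{du^2}\Bigl(-2\sqrt{n}\sqrt{g(u)}\Bigr) = \frac{\sqrt{n}}{2}\,\bigl(v_x^{\out}-v_p^{\out}\bigr)^2\, g(u)^{-3/2} \ \ge\ 0 .
\end{equation*}
This shows $f$ is a sum of affine and convex functions, hence convex on $[0,1]$. This establishes the first claim of \cref{prop:boundary_quadrature}.

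The second step uses the standard fact that a convex function on a compact interval attains its maximum at an endpoint. Hence $\max_{u\in[0,1]} f(u)=\max\{f(0),f(1)\}$. The endpoint $u=1$ corresponds to $\theta\in\{0,\pi\}$ and $u=0$ to $\theta\in\{\pi/2,3\pi/2\}$. Since $\Wtheta^2$ depends only on $\cos^2\theta$, every $\theta$ is equivalent to one in $[0,\pi/2]$. This yields $\theta_{\mathrm{opt}}\in\{0,\pi/2\}$ up to this symmetry, and gives the stated binary maximum.

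Finally, I would note that $0$ and $\pi/2$ are the principal axes of the output ellipse because $V_{\out}$ is diagonal in \cref{eq:output_moments}. The argument has no real obstacle. The only care needed is the degenerate case $v_x^{\out}=v_p^{\out}$, which occurs when $r=0$ or $\etal=0$. There $f$ is affine rather than strictly convex, so the maximum is still attained at an endpoint, but the maximizer may be non-unique. This is why the statement should be read as ``$\theta_{\mathrm{opt}}$ can be chosen in $\{0,\pi/2\}$''.
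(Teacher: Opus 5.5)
Your proof is correct and follows essentially the same route as the paper: you substitute $u=\cos^2\theta$, compute the second derivative $\frac{\sqrt{n}}{2}(v_x^{\out}-v_p^{\out})^2 g(u)^{-3/2}\ge 0$, and use the fact that a convex function on $[0,1]$ attains its maximum at an endpoint. Your extra remarks are small refinements the paper leaves implicit: positivity of $g$ on $[0,1]$, the $\cos^2\theta$ symmetry reducing $\theta$ to $[0,\pi/2]$, and the degenerate case $\eta\sinh(2r)=0$, where the maximizer may be non-unique.
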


\begin{proof}
Introducing the variable $u := \cos^2\theta \in [0,1]$ and noting that $\sin^2\theta = 1-u$, we can rewrite the projected metric using the output variance difference $\Delta := v_p^{\out} - v_x^{\out} \ge 0$. This yields
\begin{equation}
\Wtheta^2(u) = 2\eta\Ndisp u + \left( \sqrt{v_p^{\out} - \Delta u} - \sqrt{n} \right)^2.
\end{equation}
Differentiating twice with respect to $u$ gives
\begin{equation}
\frac{d^2}{du^2} \Wtheta^2(u) = \frac{\Delta^2\sqrt{n}}{2(v_p^{\out} - \Delta u)^{3/2}} \ge 0.
\end{equation}
Because the second derivative is non-negative, $\Wtheta^2(u)$ is convex on $[0,1]$. By the fundamental properties of convex functions, its maximum over a compact interval is necessarily attained at one of the boundaries. These endpoints, $u=1$ and $u=0$, correspond respectively to $\theta=0$ (the displaced quadrature) and $\theta=\pi/2$ (the orthogonal squeezed quadrature).
\end{proof}

\Cref{prop:boundary_quadrature} endows the projected geometry with a clear operational meaning: maximizing the transport contrast over the observation angle reduces strictly to a boundary choice between the principal axes of the output noise ellipse. Consequently, the optimal homodyne receiver setting can be identified without any continuous angular search. One defines the deterministic quadrature-selection rule
\begin{equation}
\theta_{\mathrm{sel}}
:=
\arg\max\!\left\{
\Wtheta(\rho_{\refstate},\rho_{\out})^2\Big|_{\theta=0},\,
\Wtheta(\rho_{\refstate},\rho_{\out})^2\Big|_{\theta=\pi/2}
\right\}.
\label{eq:theta_sel_rule}
\end{equation}

Evaluating the metric at these two candidate boundaries yields
\begin{equation}
\Wtheta(\rho_{\refstate},\rho_{\out})^2\Big|_{\theta=0}
=
2\eta\Ndisp+\left(\sqrt{v_x^{\out}}-\sqrt n\right)^2,
\label{eq:wtheta_theta0}
\end{equation}
and
\begin{equation}
\Wtheta(\rho_{\refstate},\rho_{\out})^2\Big|_{\theta=\pi/2}
=
\left(\sqrt{v_p^{\out}}-\sqrt n\right)^2.
\label{eq:wtheta_thetapi2}
\end{equation}
The optimal quadrature is thus dictated by a direct competition: the coherent displacement natively favors $\theta=0$, whereas the noise covariance deformation (driven by anti-squeezing) may favor the orthogonal axis $\theta=\pi/2$. More explicitly, the displacement-aligned axis is optimal ($\theta_{\mathrm{opt}}=0$) if and only if
\begin{equation}
2\eta\Ndisp
\ge
\bigl(\sqrt{v_p^{\out}}-\sqrt{v_x^{\out}}\bigr)
\bigl(\sqrt{v_p^{\out}}+\sqrt{v_x^{\out}}-2\sqrt n\bigr).
\label{eq:theta_opt_condition}
\end{equation}
Otherwise, the receiver must be switched to $\theta_{\mathrm{opt}}=\pi/2$. 

The practical consequence of this binary choice becomes particularly clear in the strongly lossy regime ($\eta \to 0$). Evaluating the difference between the two candidate axes yields
\begin{equation}
\Wtheta(\rho_{\refstate},\rho_{\out})^2\Big|_{\theta=0} - \Wtheta(\rho_{\refstate},\rho_{\out})^2\Big|_{\theta=\pi/2}
=
2\eta\Ndisp + O(\eta^2).
\end{equation}
Consequently, as long as the probe carries some coherent amplitude ($\Ndisp > 0$), the optimal measurement quadrature is strictly aligned with the displaced axis for sufficiently small transmissivities. In this severe-loss limit, the noise deformation induced by squeezing is entirely washed out by the thermal background, dictating that the homodyne detector must track the surviving coherent displacement to maximize state distinguishability.

Having established the transport-optimal tuning for the receiver, we now turn to the state preparation and consider the joint optimization problem $(\beta,\theta)\mapsto \Wtheta(\rho_{\refstate},\rho_{\out})^2$. As illustrated in \cref{fig:projected_metric}, the optimal squeezing fraction $\beta_{\mathrm{opt}}$ exhibits a stark angular dependence. Unlike the global isotropic metric, which strictly penalizes squeezing in this model, the projected geometry natively captures the directional advantage of squeezed probes: the optimal strategy transitions from displacement-dominated ($\beta_{\mathrm{opt}}=0$) to squeezing-favored ($\beta_{\mathrm{opt}}>0$) as the observed quadrature is rotated toward the squeezed axis ($\theta \to \pi/2$).

\begin{figure}[H]
    \centering
    \includegraphics[width=0.72\linewidth]{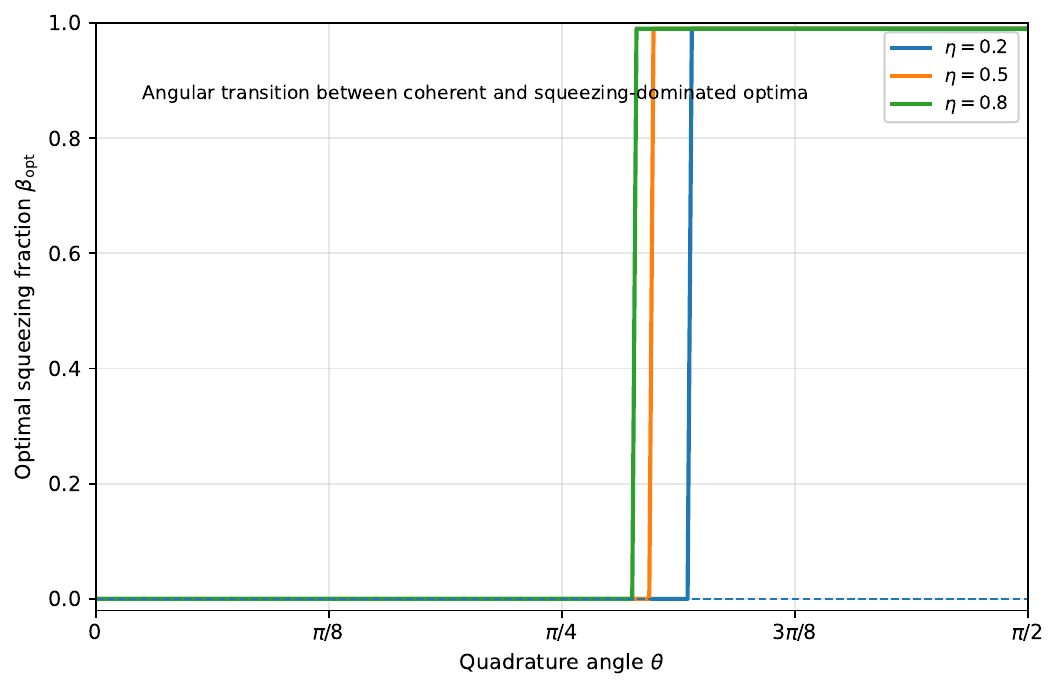}
    \caption{\small Optimal squeezing fraction $\beta_{\mathrm{opt}}$ maximizing the projected transport metric as a function of the quadrature angle $\theta$, for several values of the transmissivity $\eta$ (with $N_{\mathrm{tot}}=10$ and $n_{\mathrm{th}}=0.1$). In contrast with the isotropic case, the projected geometry exhibits a marked angular dependence, smoothly transitioning to squeezing-favored optima as the observation axis rotates toward $\pi/2$.}
    \label{fig:projected_metric}
\end{figure}

While the projected metric successfully captures the operational geometry of ideal homodyne detection, practical sensing architectures inevitably involve imperfect detectors and electronic noise. To accommodate these experimental realities, the next section extends this framework to a fully measurement-aware transport metric defined directly on noisy output statistics.

% =================================================
\section{Measurement-Aware Transport Metric}
\label{sec:measurement}
% =================================================

The projected metric introduced in \cref{sec:projected} marks a crucial conceptual shift: state distinguishability is no longer evaluated through a global phase-space cost, but through the classical distributions induced by a specific measurement. This naturally suggests a broader operational principle: in quantum sensing, transport geometry should be defined precisely at the level of the physical observation process.

We formalize this principle by making the measurement chain explicit from the outset. Let $\mathcal{M}$ denote a generic measurement procedure acting on a quantum state $\rho$ to produce a classical outcome variable. Depending on the experimental architecture, $\mathcal{M}$ may encompass the choice of observable or positive operator-valued measure (POVM), local-oscillator tuning, detector inefficiencies, additive electronic noise, and subsequent classical post-processing \cite{Leonhardt2008,Paris2004,Helstrom1976,Holevo2011}. Regardless of these underlying complexities, the measurement chain fundamentally maps any input state $\rho$ to a classical probability distribution $P_\rho^{\mathcal M}$ over the outcome space.

This deliberately broad perspective frees the analysis from the restrictions of Gaussian states, Gaussian measurements, or specific phase-space representations. It allows us to establish a universal distinguishability measure based entirely on accessible output statistics.

\begin{definition}[Measurement-aware transport metric]
Let $\rho_0$ and $\rho_1$ be two quantum states, and let $\mathcal{M}$ be an arbitrary measurement chain yielding the classical output distributions $P_{\rho_0}^{\mathcal M}$ and $P_{\rho_1}^{\mathcal M}$. We define the measurement-aware transport metric as
\begin{equation}
\Wmeas(\rho_0,\rho_1)
:=
W_2\!\left(P_{\rho_0}^{\mathcal M},\,P_{\rho_1}^{\mathcal M}\right),
\label{eq:dM_def}
\end{equation}
where $W_2$ denotes the classical quadratic Wasserstein distance on the corresponding outcome space \cite{Villani2009}.
\end{definition}

Crucially, $\Wmeas$ should not be interpreted as an abstract quantum state-space metric, but as an operational distinguishability criterion defined \emph{after} the quantum-to-classical interface. In many realistic sensing tasks, the ultimate goal is not to compare states geometrically, but to process actual data records. From this perspective, the geometry of the induced classical data stream is the only relevant one.

Our previous constructions arise naturally as special cases of this framework. If $\mathcal{M}$ represents ideal homodyne detection, \cref{eq:dM_def} strictly reduces to the projected metric $\Wtheta$. By seamlessly accommodating mode mismatch, filtering, or imperfect detection, $\Wmeas$ serves as the natural endpoint of our geometric hierarchy:
\begin{equation}
\Wiso \;\longrightarrow\; \Wtheta \;\longrightarrow\; \Wmeas.
\end{equation}
This progression tracks the level of physical description, moving from global phase-space transport, to quadrature-resolved geometry, and finally to full measurement-aware statistics.

An analytically tractable, yet highly practical, example is a homodyne receiver degraded by additive Gaussian electronic noise of variance $\sigma_d^2$. The measurement map then amounts to a convolution $P_\rho^{\mathcal M} = P_\rho^{(\theta)} * \mathcal{N}(0,\sigma_d^2)$. This classical noise floor preserves the Gaussian character of the distributions, leaving the mean displacement unchanged while regularizing the variance ($v_\theta \mapsto v_\theta + \sigma_d^2$). The measurement-aware transport metric therefore becomes
\begin{equation}
\Wmeas(\rho_0,\rho_1)^2
=
\bigl(m_\theta(\rho_0)-m_\theta(\rho_1)\bigr)^2
+
\left(
\sqrt{v_\theta(\rho_0)+\sigma_d^2}
-
\sqrt{v_\theta(\rho_1)+\sigma_d^2}
\right)^2.
\label{eq:dM_gaussian_noisy}
\end{equation}

Applying this framework to our lossy squeezed-displaced model is now straightforward. Substituting the classical statistics $m_\theta$ and $v_\theta$ previously derived in \cref{sec:projected}, we immediately obtain the explicit noisy transport metric:
\begin{equation}
\Wmeas(\rho_{\refstate},\rho_{\out})^2
=
2\eta\Ndisp\cos^2\theta
+
\left(
\sqrt{v_x^{\out}\cos^2\theta+v_p^{\out}\sin^2\theta+\sigma_d^2}
-
\sqrt{n+\sigma_d^2}
\right)^2.
\label{eq:dM_explicit_noisy}
\end{equation}
\Cref{eq:dM_explicit_noisy} generalizes the ideal projected metric to realistic detection chains. Remarkably, the underlying structure of the projected geometry survives this noise regularization: as we now show, the continuous angular search for the optimal receiver setting still collapses to a strict boundary optimization.

\begin{proposition}[Boundary optimization of the noisy projected metric]
\label{prop:boundary_quadrature_noisy}
For any fixed set of state parameters $(\eta,\Ndisp,r,n)$ and detector noise variance $\sigma_d^2\ge 0$, the squared measurement-aware metric $\Wmeas(\rho_{\refstate},\rho_{\out})^2$ is a convex function of $\cos^2\theta$. Consequently, the optimal measurement quadrature remains aligned with one of the principal axes of the output noise ellipse:
\begin{equation}
\theta_{\mathrm{opt}}\in\left\{0,\frac{\pi}{2}\right\}.
\end{equation}
Equivalently, the continuous angular optimization reduces to the discrete choice:
\begin{equation}
\max_\theta \Wmeas(\rho_{\refstate},\rho_{\out})^2
=
\max\!\left\{
\Wmeas(\rho_{\refstate},\rho_{\out})^2\Big|_{\theta=0},\,
\Wmeas(\rho_{\refstate},\rho_{\out})^2\Big|_{\theta=\pi/2}
\right\}.
\end{equation}
\end{proposition}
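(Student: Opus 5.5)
The plan is to repeat the argument of \cref{prop:boundary_quadrature}, with the detector noise absorbed into the variances. Set $u:=\cos^2\theta\in[0,1]$, so that $\sin^2\theta=1-u$, and keep $\Delta:=v_p^{\out}-v_x^{\out}\ge 0$ (nonnegative because $r\ge 0$). Introduce the shifted quantities
\begin{equation*}
a:=v_p^{\out}+\sigma_d^2, \qquad b:=n+\sigma_d^2 .
\end{equation*}
Then \cref{eq:dM_explicit_noisy} becomes
\begin{equation*}
f(u):=\Wmeas(\rho_{\refstate},\rho_{\out})^2=2\eta\Ndisp\,u+\bigl(\sqrt{a-\Delta u}-\sqrt{b}\bigr)^2 .
\end{equation*}
This has exactly the structural form treated in \cref{prop:boundary_quadrature}, with $v_p^{\out}$ replaced by $a$ and $n$ replaced by $b$. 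The first check is that the square root is well defined on all of $[0,1]$. Indeed $a-\Delta u=v_x^{\out}u+v_p^{\out}(1-u)+\sigma_d^2$ is a convex combination of the positive variances $v_x^{\out},v_p^{\out}$, plus $\sigma_d^2\ge 0$, so it is at least $v_x^{\out}+\sigma_d^2>0$.

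Next, expand the square and differentiate:
\begin{equation*}
f(u)=2\eta\Ndisp\,u+(a-\Delta u)+b-2\sqrt{b}\,\sqrt{a-\Delta u}.
\end{equation*}
The first three terms are affine in $u$. The last term is $-2\sqrt{b}$ times a concave function of $u$, since the square root is concave and increasing and it is composed with an affine map. A direct computation gives
\begin{equation*}
f''(u)=\frac{\Delta^2\sqrt{b}}{2\,(a-\Delta u)^{3/2}}\;\ge\;0 .
\end{equation*}
Hence $f$ is convex on $[0,1]$. This is the only computation needed, and it does not depend on the sign of $\sqrt{a-\Delta u}-\sqrt b$. In the degenerate case $\Delta=0$ (no squeezing), $f$ is affine and the conclusion is trivial.

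Finally, a convex continuous function on the compact interval $[0,1]$ attains its maximum at an endpoint. The endpoint $u=1$ corresponds to $\theta=0$ and $u=0$ to $\theta=\pi/2$. Every $\theta\in[0,2\pi)$ maps onto some $u\in[0,1]$, and both endpoints are realized, so the maximum over $\theta$ equals the maximum of $f$ over $[0,1]$. This gives $\max_\theta \Wmeas^2=\max\{f(1),f(0)\}$.

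I expect no genuine obstacle. The only points needing care are the strict positivity of $a-\Delta u$, which justifies differentiating the square root, and the observation that $\sigma_d^2$ enters only through the constants $a$ and $b$. These two facts are why the convexity survives the noise regularization.
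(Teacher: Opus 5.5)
Your proposal is correct and follows the paper's proof essentially verbatim: the substitution $u=\cos^2\theta$, the noise-shifted constants (your $a,b$ are the paper's $B,N$), the second derivative $\Delta^2\sqrt{b}/\bigl(2(a-\Delta u)^{3/2}\bigr)\ge 0$, and the endpoint argument on $[0,1]$. Your explicit checks that $a-\Delta u\ge v_x^{\out}+\sigma_d^2>0$ and that every $\theta$ maps onto $[0,1]$ are small additions that the paper leaves implicit.
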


\begin{proof}
Introducing the variable $u := \cos^2\theta \in [0,1]$ (so that $\sin^2\theta = 1-u$) and the noise-regularized constants $B := v_p^{\out} + \sigma_d^2$ and $N := n + \sigma_d^2$, we can rewrite the metric using the variance difference $\Delta := v_p^{\out} - v_x^{\out} \ge 0$. This yields
\begin{equation}
\Wmeas^2(u) = 2\eta\Ndisp u + \left( \sqrt{B - \Delta u} - \sqrt{N} \right)^2.
\end{equation}
Differentiating twice with respect to $u$ gives
\begin{equation}
\frac{d^2}{du^2} \Wmeas^2(u) = \frac{\Delta^2\sqrt{N}}{2(B - \Delta u)^{3/2}} \ge 0.
\end{equation}
Because the second derivative is non-negative, $\Wmeas^2(u)$ is convex on $[0,1]$. By the properties of convex functions, its maximum over this compact interval is necessarily attained at one of the boundaries: $u=1$ (corresponding to $\theta=0$) or $u=0$ (corresponding to $\theta=\pi/2$).
\end{proof}

As in the ideal projected case, this boundary optimization endows the measurement-aware geometry with a direct tuning prescription. One may define the noise-inclusive quadrature-selection rule
\begin{equation}
\theta_{\mathrm{sel}}^{\mathcal M}
:=
\arg\max\!\left\{
\Wmeas(\rho_{\refstate},\rho_{\out})^2\Big|_{\theta=0},\,
\Wmeas(\rho_{\refstate},\rho_{\out})^2\Big|_{\theta=\pi/2}
\right\}.
\label{eq:theta_sel_meas}
\end{equation}

Evaluating the metric at these two candidate axes yields
\begin{equation}
\Wmeas(\rho_{\refstate},\rho_{\out})^2\Big|_{\theta=0}
=
2\eta\Ndisp
+
\left(
\sqrt{v_x^{\out}+\sigma_d^2}
-
\sqrt{n+\sigma_d^2}
\right)^2,
\label{eq:dM_theta0}
\end{equation}
and
\begin{equation}
\Wmeas(\rho_{\refstate},\rho_{\out})^2\Big|_{\theta=\pi/2}
=
\left(
\sqrt{v_p^{\out}+\sigma_d^2}
-
\sqrt{n+\sigma_d^2}
\right)^2.
\label{eq:dM_thetapi2}
\end{equation}
Just as in the noise-free case, the optimal tuning is governed by a direct competition: the coherent displacement natively favors the aligned axis ($\theta=0$), while the covariance deformation favors the orthogonal quadrature ($\theta=\pi/2$). The crucial insight here is that detector noise does not destroy the basic structure of the optimization problem; rather, it acts as an additive regularizer under the square roots. This noise floor fundamentally compresses the covariance contribution, making explicit how instrumental imperfections reduce the useful transport contrast available for receiver tuning.

Beyond this analytically tractable Gaussian example, the broader methodological interest of $\Wmeas$ lies in its universality. The isotropic metric $\Wiso$ relies on the fact that Gaussian states are completely characterized by their first two moments \cite{Braunstein2005,Weedbrook2012,Serafini2017}. For genuinely non-Gaussian probes, however, covariance data is insufficient to encode the structural features relevant to sensing, such as higher-order moments, interference fringes, nonclassical tails, or discrete photon-number parity \cite{Kenfack2004,Genoni2007,Albarelli2018}. Because the definition in \cref{eq:dM_def} relies solely on the classical output distributions actually generated by the measurement chain, it seamlessly accommodates non-Gaussian resources such as Fock states, cat states, or photon-subtracted states \cite{Agarwal1991,Dakna1999,Ourjoumtsev2006,Ourjoumtsev2007,Zavatta2004,Gerrits2010}. As long as the output statistics can be sampled or computed, the measurement-aware transport metric remains a well-defined, operationally meaningful contrast function.

At this stage, the geometric hierarchy underlying our approach is complete. The isotropic metric ($\Wiso$) evaluates global state-space robustness, the projected metric ($\Wtheta$) captures the directional relevance of ideal homodyne measurements, and the measurement-aware metric ($\Wmeas$) provides the ultimate operational framework by incorporating explicit instrumental realities. Having established how transport geometry behaves at the detector level, it becomes natural to ask how it responds to dynamic channel fluctuations. This leads us to the final step of our analysis: utilizing the measurement-aware geometry as a robust contrast variable for adaptive sensing.

% =================================================
%\section{Fading Channels and Adaptive Decision Variables}
\section{Transport Metrics under Channel Fading}
\label{sec:fading}
% =================================================

In realistic free-space or scattering scenarios, the transmissivity of the sensing channel is rarely constant. Atmospheric turbulence, pointing errors, or inhomogeneous propagation media induce random fluctuations in the effective coupling between the probe and the receiver \cite{Vasylyev2016,Bohmann2016,Usenko2012,Semenov2012}. Consequently, the received state is no longer associated with a single fixed channel realization, but becomes a random object indexed by the instantaneous transmissivity.

In such fluctuating environments, a useful distinguishability metric must do more than separate states statically: it must retain sufficient contrast to serve as a robust monitoring observable or control variable. Let $\eta \in [0,1]$ be a random transmissivity variable governed by a probability density $p(\eta)$. While previous sections treated the transmissivity as a fixed parameter, we now make this dependence explicit by writing the received state as $\rho_\out(\eta)$ to reflect its new status as a random object. Conditioned on a given realization of $\eta$, this state 
%the received state $\rho_{\out}(\eta)$
is the output of the thermal-loss channel. Because the state itself fluctuates, the transport metrics $\Wiso$, $\Wtheta$, and $\Wmeas$ all become random variables inheriting the statistical fading of the channel.

A critical question is whether these induced metric distributions preserve enough dynamic range to remain informative. A useful analytical insight is provided by the strong-loss regime ($\eta \to 0$). 
% As shown in Appendix \ref{app:fading_asymptotics}, the metrics scale as
% \begin{equation}
% \Wiso(\eta)^2 = 2\eta\Ndisp + O(\eta^2),
% \qquad
% \Wtheta(\eta,\theta=0)^2 = 2\eta\Ndisp + O(\eta^2),
% \label{eq:fading_first_order_scaling}
% \end{equation}
% whereas the orthogonal projection is strictly compressed,
% \begin{equation}
% \Wtheta\!\left(\eta,\theta=\frac{\pi}{2}\right)^2 = O(\eta^2).
% \label{eq:fading_second_order_scaling}
% \end{equation}
As shown in \cref{app:fading_asymptotics}, the metrics scale as
\begin{equation}
\Wiso\bigl(\rho_{\refstate},\rho_{\out}(\eta)\bigr)^2 = 2\eta\Ndisp + O(\eta^2),
\qquad
\Wtheta\bigl(\rho_{\refstate},\rho_{\out}(\eta)\bigr)^2 \Big|_{\theta=0} = 2\eta\Ndisp + O(\eta^2),
\label{eq:fading_first_order_scaling}
\end{equation}
whereas the orthogonal projection is strictly compressed,
\begin{equation}
\Wtheta\bigl(\rho_{\refstate},\rho_{\out}(\eta)\bigr)^2 \Big|_{\theta=\pi/2} = O(\eta^2).
\label{eq:fading_second_order_scaling}
\end{equation}

This asymptotic structure yields an immediate physical consequence: hybrid probes combining displacement and squeezing are structurally more robust under fading than purely squeezed probes. As long as the coherent component is nonzero ($\Ndisp > 0$), the metrics aligned with the displaced axis retain a first-order sensitivity to the transmissivity. Conversely, for a purely squeezed probe ($\Ndisp = 0$), the geometric contrast contributes only at second order, making it highly susceptible to noise compression. In strongly lossy regimes, squeezing should therefore not be viewed as an isolated resource, but rather as part of a displaced-squeezed strategy: the coherent displacement guarantees leading-order robustness, while the squeezing refines the measurement geometry. This aligns with broader results in continuous-variable metrology, where coherent amplitudes are known to stabilize observability under severe channel imperfections \cite{Ono2010,Muller2020,Frascella2021}.

To illustrate this dynamic range quantitatively, we sample the fading law through $N_{\mathrm{MC}}=2500$ independent transmissivity realizations drawn from a beta distribution. We first consider the global isotropic metric $\Wiso^2$, which, as per \cref{eq:diso_explicit}, varies continuously with the instantaneous transmissivity. \Cref{fig:fading_overlap_transport} compares the fading-induced statistics of $\Wiso^2$ against the fidelity exponent $\xi_F = -\log F$. While both quantities vanish in the strong-loss limit, $\Wiso^2$ retains a significantly broader variation across the fading ensemble. This broader dynamic range is operationally crucial: for practical monitoring, one often requires a stable, macroscopic scalar proxy to quickly classify the channel as favorable or degraded, rather than a fundamental but highly compressed discrimination benchmark like fidelity \cite{Vasylyev2012,Heim2014,Bohmann2017}.

\begin{figure}[H]
    \centering
    \includegraphics[width=0.82\linewidth]{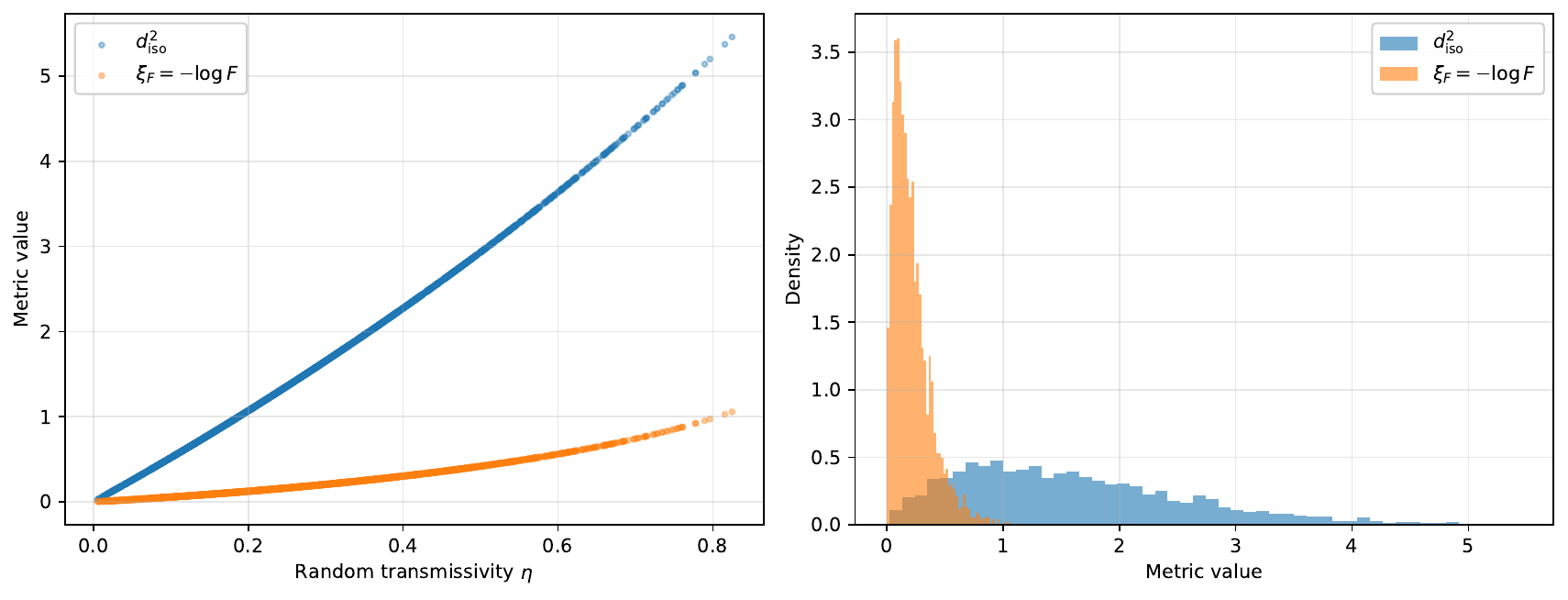}
\caption{\small Fading-induced statistics of the isotropic transport metric $d_{\mathrm{iso}}^2$ and the fidelity exponent $\xi_F=-\log F$, for $N_{\mathrm{tot}}=5$, $\beta=0.5$, and $n_{\mathrm{th}}=2$. Left: dependence on the random transmissivity $\eta$. Right: induced distributions under the fading law. Both quantities vanish as $\eta\to 0$, but the transport metric retains a markedly broader dynamic range across the fading ensemble.}
    \label{fig:fading_overlap_transport}
\end{figure}

The same operational logic extends to the projected metric $\Wtheta$, but with a heavy dependence on the chosen observation axis. As predicted by the asymptotic scaling \cref{eq:fading_first_order_scaling,eq:fading_second_order_scaling}, \cref{fig:fading_projected} confirms that the projected metric retains a broad and useful variation when aligned with the coherent displacement ($\theta=0$), while becoming severely compressed along the orthogonal axis ($\theta=\pi/2$). The usefulness of $\Wtheta$ under fading is thus genuinely directional.

\begin{figure}[H]
    \centering
    \includegraphics[width=0.82\linewidth]{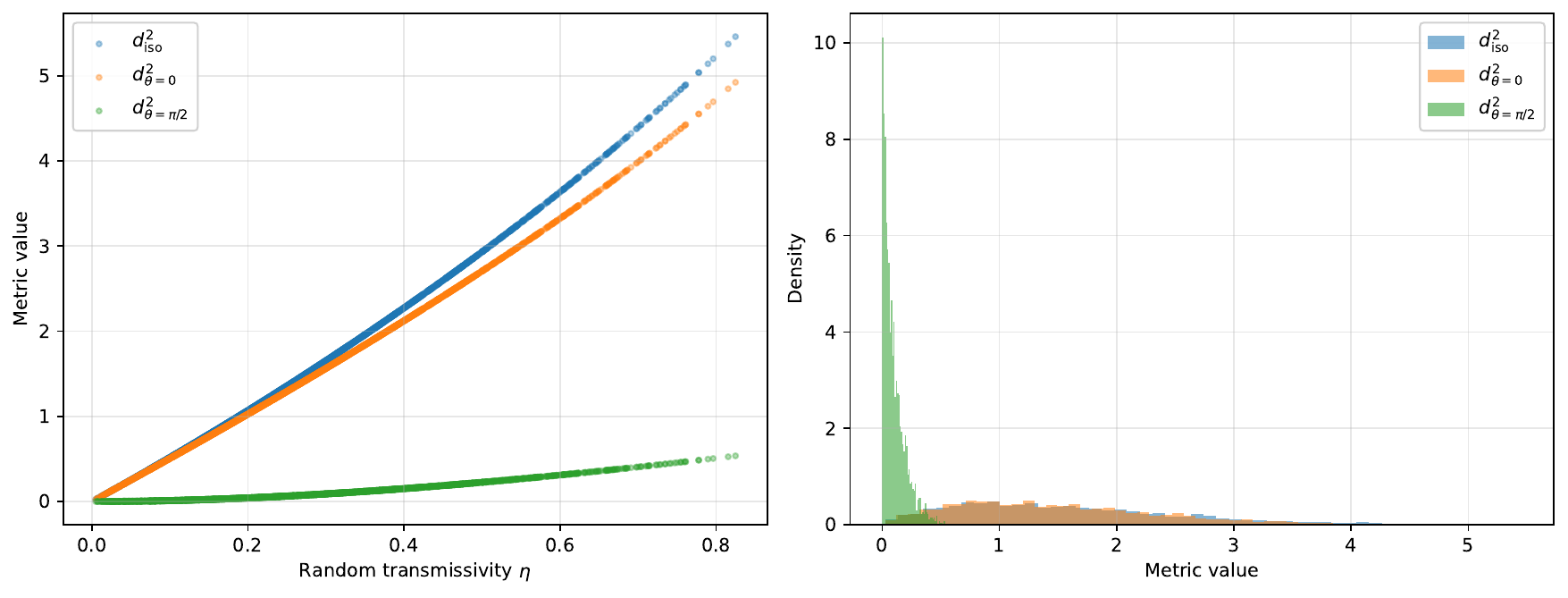}
    \caption{\small Fading-induced statistics of the isotropic and projected transport metrics under beta-distributed fading ($N_{\mathrm{tot}}=5$, $\beta=0.5$, $n_{\mathrm{th}}=2$). Left: dependence on the random transmissivity $\eta$. Right: induced distributions under the fading law. The projected metric retains a broad variation at $\theta=0$ but becomes strongly compressed at $\theta=\pi/2$, illustrating the directional nature of transport-based monitoring.}
    \label{fig:fading_projected}
\end{figure}

A compact summary of this effect is provided in \cref{fig:fading_summary}, which plots the central dynamic range $Q_{0.9}-Q_{0.1}$ for the fading-induced metrics. This statistic quantifies the useful contrast retained under the channel fluctuations. As expected from the first-order vs. second-order asymptotic hierarchy, the isotropic metric and the displaced-axis projection ($\theta=0$) preserve a substantially broader dynamic range than both the fidelity exponent and the squeezed-axis projection.
%As expected from the first-order vs. second-order asymptotic hierarchy, $\Wiso^2$ and $\Wtheta^2(\theta=0)$ preserve a substantially broader dynamic range than both the fidelity exponent and the squeezed-axis projection. 

\begin{figure}[H]
    \centering
    \includegraphics[width=0.62\linewidth]{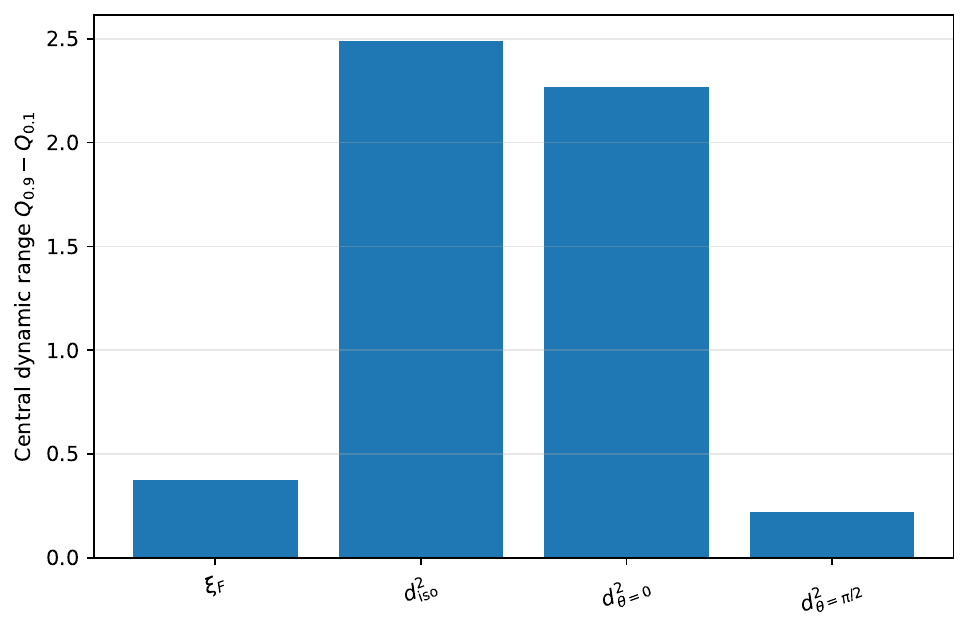}
    \caption{\small Central dynamic range $Q_{0.9}-Q_{0.1}$ of several distinguishability metrics under fading ($N_{\mathrm{tot}}=5$, $\beta=0.5$, $n_{\mathrm{th}}=2$). Metrics exhibiting first-order scaling with $\eta$ (isotropic and $\theta=0$) retain significantly more useful contrast than the fidelity exponent and the second-order projected metric ($\theta=\pi/2$).}
    \label{fig:fading_summary}
\end{figure}

Ultimately, these statistical results clarify the practical role of the transport hierarchy in non-deterministic environments. Rather than serving purely as optimal state-discrimination bounds, transport metrics under fading act as robust, geometry-aware control observables: the isotropic metric enables global robustness monitoring, while the projected and measurement-aware metrics provide directional proxies for adaptive receiver tuning.

\section{Conclusion}
%\label{sec:discussion}
\label{sec:conclusion}

The results of this work demonstrate that transport-based distinguishability in quantum sensing is not a monolithic geometric property, but a highly contextual one. By analyzing a Gaussian thermal-loss model, we established a natural hierarchy of transport metrics:
\begin{equation}
\Wiso \;\longrightarrow\; \Wtheta \;\longrightarrow\; \Wmeas.
\label{eq:hierarchy_discussion}
\end{equation}

This progression intimately tracks the physical reality of the observation process. At the base level, the isotropic metric $\Wiso$ provides a global, analytically tractable indicator of phase-space deformation, but it inherently fails to capture the directional advantage of squeezed resources. The projected metric $\Wtheta$ resolves this mismatch by shifting the transport geometry to the measurement axis, revealing that optimal homodyne tuning reduces to a strict boundary choice between principal noise axes. Finally, the measurement-aware formulation $\Wmeas$ completes the framework by absorbing realistic detector imperfections. It demonstrates that while instrumental noise fundamentally compresses the available geometric contrast, it preserves the underlying boundary structure of the optimization problem.

Exposing this hierarchy to random channel fading further clarified the operational role of these metrics. Under severe loss, the covariance deformation induced by squeezing is rapidly washed out by the thermal background, whereas the coherent displacement preserves a leading-order geometric separation. Consequently, transport metrics strongly favor displaced-squeezed strategies in fading environments. Rather than serving solely as abstract asymptotic error bounds, these geometry-based quantities emerge as robust, dynamic-range-preserving observables uniquely suited for channel monitoring and adaptive receiver tuning.

While the explicit analytical results derived here focused on a single-mode Gaussian architecture, the underlying philosophy extends significantly further. The measurement-aware construction, in particular, relies solely on the classical output statistics generated by the detector. It is therefore natively compatible with genuinely non-Gaussian probes, arbitrary POVMs, and complex digital post-processing. Future work will explore the application of this transport hierarchy to multi-mode entangled probes, non-Gaussian discrimination tasks, and real-time adaptive feedback loops. Ultimately, the core message of this study is clear: to fully harness optimal transport methods in realistic quantum metrology, the geometric cost must be rigorously aligned with the physics of the measurement.

% =================================================
\appendix

\section{Proof of the local non-optimality proposition}
\label{app:proof_nonoptimality}

In this appendix, we derive the local expansion of the isotropic transport metric near zero squeezing and prove \cref{prop:local_nonoptimality}. We rely on the Gaussian thermal-loss model and notation introduced in \cref{sec:isotropic,sec:nonoptimality}.

We start from the explicit isotropic transport formula
\begin{equation}
\Wiso(\rho_{\refstate},\rho_{\out})^2
=
2\eta \Ndisp
+
\left(\sqrt{v_x^{\out}}-\sqrt{n}\right)^2
+
\left(\sqrt{v_p^{\out}}-\sqrt{n}\right)^2,
\label{eq:appendix_Wiso_start}
\end{equation}
where the state parameters are linked by
\begin{equation}
n = \nth+\frac12, \qquad \Ndisp = (1-\beta)\Ntot, \qquad \sinh^2 r = \beta\Ntot,
\end{equation}
and the output variances are given by
\begin{equation}
v_x^{\out} = \frac{\eta}{2}e^{-2r} + (1-\eta)n,
\qquad
v_p^{\out} = \frac{\eta}{2}e^{2r} + (1-\eta)n.
\label{eq:appendix_vxp}
\end{equation}
Our goal is to determine the sign of the right derivative
\begin{equation}
\frac{\partial}{\partial \beta} \Wiso(\rho_{\refstate},\rho_{\out})^2 \Big|_{\beta=0^+}.
\end{equation}

To study the limit $\beta\to 0^+$, it is mathematically convenient to introduce the auxiliary variable
\begin{equation}
a := \sqrt{\beta\Ntot},
\label{eq:appendix_a_def}
\end{equation}
so that $a^2 = \sinh^2 r$. The expansion can then be performed in $a$, which is smooth at the origin. Using $e^{\pm 2r} = \left(\sqrt{1+a^2}\pm a\right)^2$, we obtain the asymptotic behaviors
\begin{align}
\frac12 e^{-2r} &= \frac12 - a + a^2 + O(a^3), \label{eq:appendix_half_exp_minus} \\
\frac12 e^{2r} &= \frac12 + a + a^2 + O(a^3). \label{eq:appendix_half_exp_plus}
\end{align}

Let $s_0$ denote the output variance in the strict absence of squeezing:
\begin{equation}
s_0 := \frac{\eta}{2}+(1-\eta)n.
\label{eq:appendix_s0}
\end{equation}
Since $\eta\in(0,1]$ and $n>0$, we strictly have $s_0>0$. Substituting \cref{eq:appendix_half_exp_minus,eq:appendix_half_exp_plus} into \cref{eq:appendix_vxp} yields
\begin{align}
v_x^{\out} &= s_0 - \eta a + \eta a^2 + O(a^3), \label{eq:appendix_vx_expanded} \\
v_p^{\out} &= s_0 + \eta a + \eta a^2 + O(a^3). \label{eq:appendix_vp_expanded}
\end{align}
Concurrently, the displacement contribution scales as
\begin{equation}
2\eta\Ndisp = 2\eta\Ntot(1-\beta) = 2\eta\Ntot - 2\eta a^2.
\label{eq:appendix_displacement_term}
\end{equation}

To expand the covariance contribution, we define the auxiliary function
\begin{equation}
f(s) := (\sqrt{s}-\sqrt{n})^2 = s + n - 2\sqrt{sn},
\label{eq:appendix_f_def}
\end{equation}
so that the variance term in \cref{eq:appendix_Wiso_start} rewrites as $f(v_x^{\out})+f(v_p^{\out})$. The derivatives of $f$ are
\begin{equation}
f'(s) = 1-\sqrt{\frac{n}{s}}, \qquad f''(s) = \frac12\sqrt{n}\,s^{-3/2}.
\end{equation}
Because $s_0>0$, $f$ is smooth in a neighborhood of $s_0$. Expanding to second order around $s_0$, we write
\begin{equation}
f(v_{x,p}^{\out}) = f(s_0) + f'(s_0)\bigl(v_{x,p}^{\out}-s_0\bigr) + \frac12 f''(s_0)\bigl(v_{x,p}^{\out}-s_0\bigr)^2 + O(a^3).
\end{equation}
Using \cref{eq:appendix_vx_expanded,eq:appendix_vp_expanded}, the linear terms in $a$ perfectly cancel upon summation, leaving
\begin{equation}
f(v_x^{\out})+f(v_p^{\out})
=
2f(s_0) + \left( 2\eta f'(s_0) + \eta^2 f''(s_0) \right)a^2 + O(a^3).
\label{eq:appendix_covariance_sum}
\end{equation}
Noting that $a^2 = \beta\Ntot$ and $O(a^3) = O(\beta^{3/2})$, this becomes
\begin{equation}
f(v_x^{\out})+f(v_p^{\out})
=
2f(s_0) + \Ntot \left( 2\eta f'(s_0) + \eta^2 f''(s_0) \right)\beta + O(\beta^{3/2}).
\label{eq:appendix_covariance_beta}
\end{equation}

Combining the displacement term \cref{eq:appendix_displacement_term} and the covariance sum \cref{eq:appendix_covariance_beta} into \cref{eq:appendix_Wiso_start}, the full metric expands as
\begin{equation}
\Wiso(\rho_{\refstate},\rho_{\out})^2
=
\Wiso(\beta=0)^2
+
\Ntot \left[ -2\eta + 2\eta f'(s_0) + \eta^2 f''(s_0) \right]\beta + O(\beta^{3/2}).
\label{eq:appendix_full_expansion}
\end{equation}
The right derivative at $\beta=0$ therefore exists and equates to the coefficient of the linear term:
\begin{equation}
\frac{\partial}{\partial\beta} \Wiso(\rho_{\refstate},\rho_{\out})^2 \Big|_{\beta=0^+}
=
\Ntot \left[ -2\eta + 2\eta f'(s_0) + \eta^2 f''(s_0) \right].
\label{eq:appendix_beta_derivative_step1}
\end{equation}
Substituting the explicit forms of $f'(s_0)$ and $f''(s_0)$ yields
\begin{equation}
\frac{\partial}{\partial\beta} \Wiso(\rho_{\refstate},\rho_{\out})^2 \Big|_{\beta=0^+}
=
\Ntot \left[ -2\eta\sqrt{\frac{n}{s_0}} + \frac{\eta^2}{2}\sqrt{n}\,s_0^{-3/2} \right].
\label{eq:appendix_beta_derivative_step2}
\end{equation}
Factoring out the strictly positive term $\Ntot\eta\sqrt{n}\,s_0^{-3/2}$, we obtain
\begin{equation}
\frac{\partial}{\partial\beta} \Wiso(\rho_{\refstate},\rho_{\out})^2 \Big|_{\beta=0^+}
=
\Ntot\eta\sqrt{n}\,s_0^{-3/2} \left( \frac{\eta}{2} - 2s_0 \right).
\label{eq:appendix_beta_derivative_factorized}
\end{equation}
Finally, expanding the bracket via the definition of $s_0$ gives
\begin{equation}
\frac{\eta}{2} - 2s_0 = \frac{\eta}{2} - 2\left( \frac{\eta}{2} + (1-\eta)n \right) = -\frac{\eta}{2} - 2(1-\eta)n.
\label{eq:appendix_sign_bracket}
\end{equation}
Because the physical parameters satisfy $\eta\in(0,1]$ and $n>0$, this bracket is strictly negative. Consequently,
\begin{equation}
\frac{\partial}{\partial\beta} \Wiso(\rho_{\refstate},\rho_{\out})^2 \Big|_{\beta=0^+} < 0,
\end{equation}
which concludes the proof of \cref{prop:local_nonoptimality}.

The physical interpretation of this negative sign is unambiguous: at first order, the geometric cost of anti-squeezing strictly dominates the benefits of variance reduction along the squeezed axis. This confirms that while the isotropic metric robustly evaluates global phase-space deformation, it inherently penalizes the directional nature of squeezed resources.

% =================================================
\section{Low-transmissivity asymptotics and fading-induced statistics}
\label{app:fading_asymptotics}

In this appendix, we derive the small-transmissivity expansions used in the fading analysis of \cref{sec:fading}. The primary objective is to demonstrate that, in the strong-loss regime, the isotropic and projected transport metrics do not scale identically with the channel transmissivity. This asymptotic behavior provides the analytical explanation for the contrast hierarchies and dynamic ranges observed numerically in \cref{fig:fading_overlap_transport,fig:fading_projected,fig:fading_summary}.

\subsection{Output variances at small transmissivity}

Recall that the output variances of the thermal-loss channel are given by
\begin{equation}
v_x^{\out}(\eta)=\frac{\eta}{2}e^{-2r}+(1-\eta)n,
\qquad
v_p^{\out}(\eta)=\frac{\eta}{2}e^{2r}+(1-\eta)n,
\end{equation}
where $n=n_{\mathrm{th}}+\frac12$. To isolate the linear dependence on $\eta$, it is convenient to introduce the purely squeezed variance components
\begin{equation}
a_x:=\frac12 e^{-2r},
\qquad
a_p:=\frac12 e^{2r},
\end{equation}
so that the output variances rewrite exactly as
\begin{equation}
v_x^{\out}(\eta)=n+\eta(a_x-n),
\qquad
v_p^{\out}(\eta)=n+\eta(a_p-n).
\label{eq:appendix_eta_variances}
\end{equation}

In the strong-loss limit ($\eta \to 0$), the variance deviation $\delta \propto \eta$ is small. We rely on the standard Taylor expansion
\begin{equation}
\sqrt{n+\delta}
=
\sqrt n+\frac{\delta}{2\sqrt n}-\frac{\delta^2}{8n^{3/2}}+O(\delta^3),
\qquad
\delta\to 0,
\end{equation}
which immediately implies that the squared difference scales identically at leading order:
\begin{equation}
(\sqrt{n+\delta}-\sqrt n)^2
=
\frac{\delta^2}{4n}+O(\delta^3).
\label{eq:appendix_sqrt_square_expansion}
\end{equation}

\subsection{Isotropic metric}

Applying this expansion to the isotropic transport metric, we evaluate the covariance penalty:
\begin{equation}
\Wiso\bigl(\rho_{\refstate},\rho_{\out}(\eta)\bigr)^2
=
2\eta\Ndisp
+
\left(\sqrt{v_x^{\out}(\eta)}-\sqrt n\right)^2
+
\left(\sqrt{v_p^{\out}(\eta)}-\sqrt n\right)^2.
\end{equation}
Using \cref{eq:appendix_eta_variances,eq:appendix_sqrt_square_expansion}, the variance contributions expand as
\begin{equation}
\left(\sqrt{v_x^{\out}(\eta)}-\sqrt n\right)^2
=
\frac{\eta^2}{4n}(a_x-n)^2+O(\eta^3),
\end{equation}
and identically for the $p$-quadrature. Summing these terms yields the full expansion
\begin{equation}
\Wiso\bigl(\rho_{\refstate},\rho_{\out}(\eta)\bigr)^2
=
2\eta\Ndisp
+
\frac{\eta^2}{4n}\Bigl[(a_x-n)^2+(a_p-n)^2\Bigr]
+
O(\eta^3).
\label{eq:appendix_Wiso_eta_expansion}
\end{equation}
In particular, the isotropic metric retains a linear leading-order term:
\begin{equation}
\Wiso\bigl(\rho_{\refstate},\rho_{\out}(\eta)\bigr)^2
=
2\eta\Ndisp+O(\eta^2).
\label{eq:appendix_Wiso_first_order}
\end{equation}

\subsection{Projected metric at fixed angle}

For a fixed observation angle $\theta$, the projected metric reads
\begin{equation}
\Wtheta\bigl(\rho_{\refstate},\rho_{\out}(\eta)\bigr)^2
=
2\eta\Ndisp\cos^2\theta
+
\left(
\sqrt{v_\theta^{\out}(\eta)}-\sqrt n
\right)^2,
\end{equation}
where the projected output variance is $v_\theta^{\out}(\eta) = v_x^{\out}(\eta)\cos^2\theta+v_p^{\out}(\eta)\sin^2\theta$.
Defining the effective angle-dependent squeezed variance
\begin{equation}
a_\theta:=a_x\cos^2\theta+a_p\sin^2\theta,
\end{equation}
we similarly obtain $v_\theta^{\out}(\eta)=n+\eta(a_\theta-n)$. The projected metric therefore expands as
\begin{equation}
\Wtheta\bigl(\rho_{\refstate},\rho_{\out}(\eta)\bigr)^2
=
2\eta\Ndisp\cos^2\theta
+
\frac{\eta^2}{4n}(a_\theta-n)^2
+
O(\eta^3).
\label{eq:appendix_Wtheta_eta_expansion}
\end{equation}

Evaluating this expression at the two candidate boundary axes reveals a strict structural difference:
\begin{align}
\Wtheta\bigl(\rho_{\refstate},\rho_{\out}(\eta)\bigr)^2\Big|_{\theta=0}
&=
2\eta\Ndisp
+
\frac{\eta^2}{4n}(a_x-n)^2
+
O(\eta^3),
\label{eq:appendix_Wtheta0_eta}
\\
\Wtheta\bigl(\rho_{\refstate},\rho_{\out}(\eta)\bigr)^2\Big|_{\theta=\pi/2}
&=
\frac{\eta^2}{4n}(a_p-n)^2
+
O(\eta^3).
\label{eq:appendix_Wtheta90_eta}
\end{align}
\Cref{eq:appendix_Wtheta0_eta} confirms that the quadrature aligned with the coherent displacement preserves a first-order sensitivity to the channel transmissivity. Conversely, \cref{eq:appendix_Wtheta90_eta} demonstrates that the orthogonal squeezed quadrature provides a geometric contrast only at second order. This mathematically explains the severe compression observed numerically for the orthogonal projection under strong loss.

\subsection{Measurement-aware metric with additive detector noise}

For the noisy quadrature measurement chain introduced in \cref{sec:measurement}, the measurement-aware transport metric is given by
\begin{equation}
\Wmeas\bigl(\rho_{\refstate},\rho_{\out}(\eta)\bigr)^2
=
2\eta\Ndisp\cos^2\theta
+
\left(
\sqrt{v_\theta^{\out}(\eta)+\sigma_d^2}
-
\sqrt{n+\sigma_d^2}
\right)^2.
\end{equation}
Setting the noise-regularized reference variance as $N:=n+\sigma_d^2$, the identical Taylor expansion yields
\begin{equation}
\Wmeas\bigl(\rho_{\refstate},\rho_{\out}(\eta)\bigr)^2
=
2\eta\Ndisp\cos^2\theta
+
\frac{\eta^2}{4N}(a_\theta-n)^2
+
O(\eta^3).
\label{eq:appendix_Wmeas_eta_expansion}
\end{equation}
Consequently, additive detector noise merely rescales the quadratic coefficient ($n \mapsto N$), but leaves the fundamental first-order versus second-order geometric hierarchy completely intact.

\subsection{Consequences for fading-induced statistics}

To rigorously connect these expansions to the fading analysis, we model the strong-loss statistical regime by introducing the random variable
\begin{equation}
\eta=\varepsilon Z,
\qquad
0\le Z\le 1,
\qquad
\varepsilon\ll 1,
\end{equation}
where $Z$ follows a fixed bounded distribution. From \cref{eq:appendix_Wiso_first_order}, the fading-induced isotropic metric scales as
\begin{equation}
\Wiso\bigl(\rho_{\refstate},\rho_{\out}(\varepsilon Z)\bigr)^2
=
2\varepsilon\Ndisp Z
+
O(\varepsilon^2).
\end{equation}
Taking the expectation and variance over the random variable $Z$, we obtain
\begin{align}
\mathbb E\!\left[\Wiso\bigl(\rho_{\refstate},\rho_{\out}(\eta)\bigr)^2\right]
&=
2\varepsilon\Ndisp\,\mathbb E[Z]
+
O(\varepsilon^2),
\label{eq:appendix_mean_Wiso}
\\
\mathrm{Var}\!\left(\Wiso\bigl(\rho_{\refstate},\rho_{\out}(\eta)\bigr)^2\right)
&=
4\varepsilon^2\Ndisp^2\,\mathrm{Var}(Z)
+
O(\varepsilon^3).
\label{eq:appendix_var_Wiso}
\end{align}

Applying the same statistical treatment to the projected boundaries using \cref{eq:appendix_Wtheta0_eta,eq:appendix_Wtheta90_eta} yields
\begin{equation}
\mathbb E\!\left[\Wtheta\bigl(\rho_{\refstate},\rho_{\out}(\eta)\bigr)^2\Big|_{\theta=0}\right]
=
2\varepsilon\Ndisp\,\mathbb E[Z]
+
O(\varepsilon^2),
\label{eq:appendix_mean_Wtheta0}
\end{equation}
whereas the orthogonal projection falls off rapidly:
\begin{align}
\mathbb E\!\left[\Wtheta\bigl(\rho_{\refstate},\rho_{\out}(\eta)\bigr)^2\Big|_{\theta=\pi/2}\right]
&=
\frac{\varepsilon^2}{4n}(a_p-n)^2\,\mathbb E[Z^2]
+
O(\varepsilon^3),
\label{eq:appendix_mean_Wtheta90}
\\
\mathrm{Var}\!\left(\Wtheta\bigl(\rho_{\refstate},\rho_{\out}(\eta)\bigr)^2\Big|_{\theta=\pi/2}\right)
&=
O(\varepsilon^4).
\label{eq:appendix_var_Wtheta90}
\end{align}

These relations definitively prove that in a strongly lossy fading regime, the isotropic metric and the displaced-axis projection inherit the channel fluctuations at first order ($\propto \varepsilon$). In stark contrast, the orthogonal squeezed-axis projection is violently compressed, with its mean scaling as $O(\varepsilon^2)$ and its variance as $O(\varepsilon^4)$. This analytically guarantees the broader useful dynamic ranges observed in \cref{sec:fading}.

% -------------------------------------------------
% BIBLIOGRAPHY
% -------------------------------------------------
\bibliographystyle{abbrv}
\bibliography{references}
\end{document}